\documentclass[
  prx,
  twocolumn,
  superscriptaddress,
  longbibliography
]{revtex4-2}

\usepackage{amsmath,amssymb,bm,mathtools,mathrsfs}
\usepackage{amsthm}
\usepackage{physics}
\usepackage{graphicx}
\usepackage[dvipsnames]{xcolor}
\usepackage[colorlinks=true,linkcolor=blue,citecolor=blue,urlcolor=blue]{hyperref}
\usepackage{placeins}

\graphicspath{{figures/}}

\newtheorem{theorem}{Theorem}
\newtheorem{lemma}[theorem]{Lemma}

\theoremstyle{definition}

\begin{document}

\title{Measurement-Induced Phase Transitions and Logical-Space Localization in \\ Floquet Monitored Clifford Circuits}

\author{Hyunsoo Ha}
\affiliation{Department of Physics, Princeton University, Princeton, New Jersey 08544, USA}
\affiliation{Center for Theoretical Physics -- a Leinweber Institute, Massachusetts Institute of Technology, Cambridge, Massachusetts 02139, USA}

\author{David A. Huse}
\affiliation{Department of Physics, Princeton University, Princeton, New Jersey 08544, USA}

\begin{abstract}
An entanglement transition can occur without a corresponding transition in purification.
We demonstrate this separation in spatially local monitored Clifford circuits with a fixed number of layers per Floquet period, where a spatially random pattern of Clifford gates and Pauli measurements is repeated exactly in time.
Rare spatial structures persist under this repetition and break the correspondence between entanglement and purification found in conventional spacetime-random circuits. 
In one spatial dimension, the late-time entanglement produced with an initial product state remains area-law, while the system remains extensively mixed in the thermodynamic and long-time limits if the initial state is maximally mixed. In two or more dimensions, we find a transition between volume-law and area-law entanglement while the system remains mixed on both sides of this transition. An extensive amount of quantum information survives indefinitely in both phases and evolves unitarily despite the repeated measurements. We introduce the logical-support radius to characterize its spatial spreading and identify the volume-law and area-law phases with delocalized and localized logical dynamics, respectively. We further introduce the logical Krylov dimension, which counts the independent logical operators dynamically generated from an initially local operator. Its scaling resolves the transition and provides evidence for a new universality class of measurement-induced phase transitions.
\end{abstract}

\maketitle

\section{Introduction}
\label{sec:introduction}

Measurements can qualitatively alter the behavior of quantum many-body systems. Even when applied locally, measurements can reshape long-range correlations and entanglement in quantum critical states~\cite{Garratt2023,Weinstein2023,Lee2023,Yang2023,Sun2023,Murciano2023,Paviglianiti2024,Patil2024}. When measurements are repeatedly interspersed with unitary dynamics, their competition can give rise to measurement-induced phase transitions~\cite{skinner_nahum_prx2019,li_fisher_quantumzeno,LCF2019}. In the conventional spacetime-random setting, increasing the measurement rate drives a transition from a volume-law entangled phase to an area-law entangled phase. The same transition is also reflected in purification dynamics. Starting from a maximally mixed state, the system rapidly purifies in the area-law phase, whereas the purification time grows exponentially with system size in the volume-law phase~\cite{gullans2020}.

Measurement-induced transitions have been studied extensively using stabilizer states and Clifford circuits, as they enable efficient simulations at system sizes far beyond those accessible for generic quantum circuits~\cite{LCF2019,lunt2021,sierant2022,Aaronson2004}. The stabilizer formalism provides a compact description of many-body entanglement and is also foundational in quantum error correction~\cite{Fattal2004,gottesman1997}; the volume-law phase admits a quantum-error-correcting interpretation~\cite{choi2020,fan2021,li2021qec}. While their universality classes generally differ from generic circuits, Clifford circuits capture the essential competition between entanglement generation and measurement~\cite{Zabalo2022}.

Much of the literature on measurement-induced transitions has focused on spacetime-random circuits, in which the local gates and measurement locations are resampled as the dynamics proceeds. Here we instead study Floquet monitored Clifford circuits, where a single spatially-random pattern of Clifford unitaries and measurements is sampled once and then exactly repeated every period. The resulting disorder is therefore perfectly correlated in time. Related time-periodic random Clifford circuits have previously been studied in unitary settings~\cite{sunderhauf2018,kovacs2026}. Other nonunitary circuit settings with structured spacetime dependence include monitored dynamics with periodic unitary evolution~\cite{chan2019,sierant2022floquet,mochizuki2025}, spacetime-dual constructions~\cite{Ippoliti2021,Lu2021,Ippoliti2022}, crystalline Clifford circuits with spacetime-translation-invariant measurements~\cite{Sommers2023}, and periodically driven monitored free-fermion systems~\cite{Chatterjee2025}. This repeated structure is also reminiscent of Floquet quantum error-correcting codes, which are built from periodic measurement schedules~\cite{davydova2023,aasen2023,hastings2021}. Our focus, however, is not on code design, but on how Floquet repetition changes entanglement, purification, and the dynamics of the logical quantum information that survives measurement.

\begin{figure}[t]
    \centering
    \includegraphics[width=0.5\textwidth]{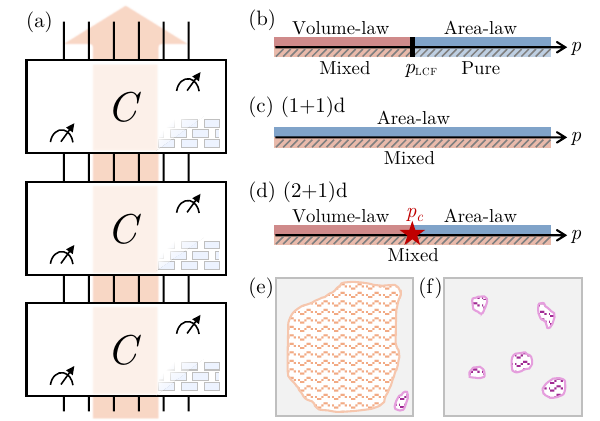}
    \caption{Schematic overview of the Floquet monitored circuit and its phase diagram.
    (a) A Floquet monitored circuit is obtained by repeating the same circuit in time. In this work, we consider circuits composed of local Clifford gates and local Pauli measurements in various dimensions. A logical subspace survives the measurements indefinitely, illustrated by the orange arrow, and can be characterized from the plateau stabilizer group as described in the main text. The Floquet evolution acts effectively unitarily within this surviving logical subspace.
    (b) Phase diagram of a generic space-time-random monitored circuit, where the gates and measurements vary randomly in both space and time. Here $p$ denotes the measurement rate. The entanglement transition from volume-law to area-law entanglement coincides with the purification transition from a mixed to a pure phase at $p_{\mathrm{LCF}}$~\cite{LCF2019,gullans2020}.
    (c) In a $(1+1)$-dimensional Floquet circuit, these two diagnostics behave differently. The system remains area-law entangled and mixed throughout the range of measurement rates.
    (d) In a $(2+1)$-dimensional Floquet circuit, the system undergoes an entanglement transition at $p_c$ from a volume-law to an area-law phase, while remaining mixed on both sides of the transition. The persistent mixed phase is associated with a nontrivial logical subspace that survives indefinitely under the repeated measurements.
    (e) Schematic of the surviving logical subspace in the volume-law phase. The logical subspace contains a delocalized component with system-wide spatial support (orange region), while spatially localized components also coexist. The volume-law phase therefore corresponds to a delocalized logical phase, characterized by a system-wide delocalized component of the surviving logical subspace.
    (f) Schematic of the surviving logical subspace in the area-law phase. In contrast to (e), the logical subspace contains no system-wide delocalized component, and all of its components remain spatially localized (purple regions). The area-law phase therefore corresponds to a localized logical phase.}
    \label{fig:Overview}
\end{figure}

We find that spatially-random Floquet repetition qualitatively separates entanglement growth from purification dynamics. In one spatial dimension, persistent spatial structures with locally high measurement density prevent entanglement from spreading throughout the system, so the late-time entanglement remains area-law for an initial product state for any nonzero measurement rate.  At the same time, other persistent regions with locally low measurement density protect a finite entropy density from measurement, so an initially maximally mixed state remains extensively mixed in the thermodynamic limit for any measurement rate less than the maximum value where every physical qubit is measured. In two (or more) dimensions, the system likewise remains mixed in the thermodynamic limit, but entanglement can propagate around local obstructions. As a result, the system exhibits a transition between volume-law and area-law entanglement without a corresponding purification transition.

To describe the quantum information that survives measurement, we use the plateau stabilizer group of Refs.~\cite{sommers2024,Fu2025}. Starting from the maximally mixed state, repeated Floquet evolution reaches a fixed stabilizer group, while the remaining logical degrees of freedom continue to evolve unitarily. Here, we explicitly construct the effective Floquet unitary acting on these degrees of freedom and represent its action on logical Pauli operators by a fixed symplectic matrix. The same logical map governs the late-time evolution from any initial stabilizer state. This formulation allows us to study directly how the surviving logical information spreads under repeated Floquet evolution.

This effective description provides a natural interpretation of the two-dimensional entanglement transition as localization within the surviving logical space. We characterize these dynamics using two complementary probes. The logical-support radius measures the size of the smallest neighborhood of the initial site that can support a physical representative of the evolved logical operator, distinguishing spatial spreading from confinement deep within the two phases. The logical Krylov dimension counts the independent logical Pauli operators generated from a local seed under repeated Floquet evolution and provides a sharp diagnostic of the transition. Together with bipartite entanglement and antipodal mutual information, these probes connect the volume-law and area-law phases to delocalized and localized logical dynamics, respectively.  In particular, initially localized logical operators all remain localized in the area-law phase, while some of them delocalize in the volume-law phase.

Floquet repetition also changes the critical behavior of the entanglement transition. The disorder is perfectly correlated in time and therefore acts as columnar disorder in spacetime. For a critical point in $D$ spatial dimensions, the Harris criterion for stability against this quenched disorder requires $\nu\geq 2/D$~\cite{harris1974,chayes1986}. This criterion does not imply that a transition must survive. In one dimension, we find no finite-$p$ entanglement transition. In two dimensions, where a transition does survive, the reported exponents $\nu<1$ of the spacetime-random MIPT violate the $D=2$ Harris bound~\cite{turkeshi2020,lunt2021,sierant2022}. The two-dimensional Floquet transition is therefore expected to belong to a universality class distinct from that of the spacetime-random transition. The strong system-size growth of the AMI peak, in contrast to the $O(1)$ critical AMI of spacetime-random circuits, provides numerical evidence for this distinction. The system-size dependence of the time required for entanglement growth to saturate also shows no evidence of infinite-randomness criticality over the accessible sizes~\cite{zabalo2023,Shkolnik2023}. Together, these results provide evidence for a new disorder-controlled critical point.

The remainder of the paper is organized as follows:  In Sec.~\ref{sec:plateau_logical}, we introduce the plateau stabilizer group and derive the effective Floquet unitary dynamics of the surviving logical degrees of freedom. In Sec.~\ref{sec:one_dimension}, we show how persistent spatial structures separate entanglement and purification in one dimension. In Sec.~\ref{sec:two_dimensional_transition}, we turn to two dimensions and establish the volume-law to area-law entanglement transition using bipartite entanglement and antipodal mutual information. In Sec.~\ref{sec:deep_phases}, we introduce the logical support radius and use it to characterize the spatial spreading and confinement of logical operators. In Sec.~\ref{sec:logical_krylov}, we introduce the logical Krylov dimension and use it to probe the localization transition in the surviving logical space. In Sec.~\ref{sec:discussion}, we discuss the implications of these results and how they may change under weak departures from the Clifford limits. Technical proofs and additional numerical results are presented in the Appendices.

\section{Plateau Stabilizer Group and Effective Floquet Dynamics}
\label{sec:plateau_logical}

We first review the plateau stabilizer group of Refs.~\cite{sommers2024} and then characterize the effective Floquet dynamics of the surviving logical degrees of freedom. The plateau group specifies the surviving logical space, while the Floquet circuit induces a fixed, invertible Clifford transformation on its logical Pauli operators. This provides a dynamical description of how the surviving quantum information evolves despite the repeated measurements.

Throughout this work, we take Pauli operators modulo overall phases in $\{\pm1,\pm i\}$ and keep only the corresponding stabilizer groups. In particular, we discard the $\pm1$ eigenvalue information carried by stabilizer signs. This representation is sufficient for the entanglement entropies and full system entropy considered below, which depend only on the stabilizer group structure. With this convention, the stabilizer group update is independent of the measurement outcomes. A fixed monitored Clifford circuit therefore defines a deterministic map on these groups. We denote the circuit corresponding to one Floquet period by $\mathcal{C}_F$.

\subsection{Plateau Stabilizer Group}
\label{sec:plateau_stabilizer_group}

The maximally mixed state has the trivial stabilizer group
$\mathrm{ISG}_0=\langle\hat{\mathbb{I}}\rangle$. We denote the instantaneous stabilizer group after $t$ Floquet periods by $\mathrm{ISG}_t$, so that
\begin{align}
    \mathrm{ISG}_{t+1}
    =
    \mathcal{C}_F[\mathrm{ISG}_t]~.
    \label{eq:ISG_update}
\end{align}

Both Clifford unitaries and Pauli measurements preserve inclusion of stabilizer groups in this representation, as shown in Appendix~\ref{app:inclusion_property}. Thus,
\begin{align}
    \mathcal{S}_1\subseteq\mathcal{S}_2
    \quad\Longrightarrow\quad
    \mathcal{C}_F[\mathcal{S}_1]
    \subseteq
    \mathcal{C}_F[\mathcal{S}_2]~.
    \label{eq:inclusion_property}
\end{align}
Since $\mathrm{ISG}_0$ is the trivial stabilizer group, it is contained in every stabilizer group and in particular
$\mathrm{ISG}_0\subseteq\mathrm{ISG}_1$. Repeated application of
Eq.~\eqref{eq:inclusion_property} therefore gives
\begin{align}
    \mathrm{ISG}_0
    \subseteq
    \mathrm{ISG}_1
    \subseteq
    \mathrm{ISG}_2
    \subseteq\cdots.
    \label{eq:ISG_sequence}
\end{align}

Each strict inclusion increases the stabilizer rank by at least one, while the rank of a stabilizer group on $N$ qubits cannot exceed $N$. There is therefore an earliest time $t^*\leq N$ at which the sequence stops growing. We define
\begin{align}
    \mathrm{ISG}_\infty
    \equiv
    \mathrm{ISG}_{t^*}
    =
    \mathrm{ISG}_{t^*+1}~,
    \label{eq:plateau_stabilizer_group}
\end{align}
and call $\mathrm{ISG}_\infty$ the \emph{plateau stabilizer group}. Because the same Floquet map is applied every period, equality at $t^*$ implies
$\mathrm{ISG}_t=\mathrm{ISG}_\infty$ for all $t\geq t^*$.

For a mixed stabilizer state on $N$ qubits, the full system entropy is
\begin{align}
    S(t)
    =
    N-\operatorname{rank}(\mathrm{ISG}_t)~.
    \label{eq:purification_entropy_rank}
\end{align}
The nested sequence in Eq.~\eqref{eq:ISG_sequence} therefore makes $S(t)$ nonincreasing. Moreover, because two nested stabilizer groups of equal rank are identical, $t^*$ is precisely the first time at which the full system entropy reaches its final plateau. We therefore call $t^*$ the \emph{plateau time}.

\subsection{Logical Floquet Dynamics}
\label{sec:logical_floquet_dynamics}

Since
$\mathcal{C}_F[\mathrm{ISG}_\infty]=\mathrm{ISG}_\infty$,
the same plateau stabilizer group is present after every Floquet period once the plateau is reached. The corresponding logical Pauli space is therefore preserved, although the circuit can act nontrivially within it. The number of surviving logical qubits is
\begin{align}
    k
    =
    S(t^*)
    =
    N-\operatorname{rank}(\mathrm{ISG}_\infty)~.
    \label{eq:number_logical_qubits}
\end{align}
The corresponding logical operator space is
\begin{align}
    \mathcal{L}_\infty
    =
    \mathrm{ISG}_\infty^\perp/
    \mathrm{ISG}_\infty~,
    \label{eq:logical_pauli_space}
\end{align}
where $\mathrm{ISG}_\infty^\perp$ denotes the Pauli operators commuting with every element of $\mathrm{ISG}_\infty$. Physical Pauli operators $P$ that differ by a plateau stabilizer represent the same logical operator. Thus, for $P\in\mathrm{ISG}_\infty^\perp$, we denote its logical equivalence class by $[P]$, with $[P]=[PS]$ for any $S\in\mathrm{ISG}_\infty$. In particular, every element of $\mathrm{ISG}_\infty$ represents the logical identity. Commutation between logical Pauli operators defines the usual symplectic form over the binary field $\mathbb{F}_2$. The resulting logical Pauli space is a symplectic vector space of dimension $2k$.

The plateau stabilizer group specifies which logical degrees of freedom survive, but not how they evolve. To define their Floquet dynamics, consider a nontrivial logical Pauli operator $[P]$ and the stabilizer group $\langle\mathrm{ISG}_\infty,P\rangle$, generated by $\mathrm{ISG}_\infty$ together with $P$. Since $P$ is nontrivial in the logical space, this group has rank one larger than $\mathrm{ISG}_\infty$. One Floquet period maps this group to another stabilizer group containing the same plateau stabilizer group,
\begin{align}
    \mathcal{C}_F
    \left[
        \langle\mathrm{ISG}_\infty, P\rangle
    \right]
    =
    \langle\mathrm{ISG}_\infty, P'\rangle~.
    \label{eq:logical_floquet_unitary}
\end{align}
Here $P'$ is again a nontrivial logical Pauli operator.
Replacing $P$ by $PS$ with $S\in\mathrm{ISG}_\infty$ leaves the enlarged stabilizer group unchanged, so the image $[P']$ depends only on the logical class $[P]$. The map $[P]\mapsto[P']$ therefore defines the Floquet evolution of the logical Pauli operators.

Appendix~\ref{app:logical_map} shows that this map is an invertible Clifford transformation of $\mathcal{L}_\infty$. A single Clifford representative $\mathcal{U}_F$ can be chosen such that
$[P']=[\mathcal{U}_F P\mathcal{U}_F^\dagger]$ for every $[P]\in\mathcal{L}_\infty$
and
$\mathcal{U}_F[\mathrm{ISG}_\infty]=\mathrm{ISG}_\infty$.
More generally, the same representative describes one Floquet period for any stabilizer group $\mathcal{S}$ containing the plateau stabilizer group,
\begin{align}
    \mathrm{ISG}_\infty\subseteq\mathcal{S}
    \quad\Longrightarrow\quad
    \mathcal{C}_F[\mathcal{S}]
    =
    \mathcal{U}_F[\mathcal{S}]~,
    \label{eq:CF_equals_UF}
\end{align}
where the equality is understood for stabilizer groups with phases discarded, as throughout this work.

At the level of full stabilizer \textit{states}, the Clifford representative generally depends on the measurement outcomes because they determine the stabilizer signs~\cite{aasen2023}. Discarding these signs leaves a unique induced action on the logical Pauli classes. In a chosen logical Pauli basis, this action is represented by a symplectic matrix $M_F\in\mathrm{Sp}(2k,\mathbb{F}_2)$, which we call the logical Floquet matrix.

The matrix $M_F$ is obtained by applying Eq.~\eqref{eq:logical_floquet_unitary} to a complete basis of $2k$ logical Pauli operators. The numerical construction is described in Appendix~\ref{app:numerical_map}. Once obtained, the $2k\times2k$ matrix can be iterated directly to generate the logical Floquet dynamics without repeatedly evolving the full monitored circuit. For the parameters considered here, $k$ is substantially smaller than the total number of physical qubits, which makes this reduction computationally useful and allows access to larger system sizes. In later sections, we explicitly construct and iterate $M_F$ in the numerical calculations to access larger system sizes.

\subsection{Long Time Dynamics from Arbitrary Initial States}
\label{sec:long_time_dynamics}

The logical Floquet dynamics applies to any initial stabilizer group $\mathcal{S}_0$, not only the maximally mixed state. Since $\langle\hat{\mathbb{I}}\rangle\subseteq\mathcal{S}_0$, repeated application of the inclusion property in Eq.~\eqref{eq:inclusion_property}, together with $\mathrm{ISG}_t=\mathrm{ISG}_\infty$ for $t\geq t^*$, gives
\begin{align}
    \mathrm{ISG}_\infty
    =
    \mathcal{C}_F^t[\langle\hat{\mathbb{I}}\rangle]
    \subseteq
    \mathcal{C}_F^t[\mathcal{S}_0]~,
    \qquad
    t\geq t^*.
    \label{eq:arbitrary_state_contains_PSG}
\end{align}
Thus, after the plateau time, the evolved stabilizer group always contains $\mathrm{ISG}_\infty$. Equation~\eqref{eq:CF_equals_UF} therefore applies to every subsequent Floquet period. The stabilizer group evolves under $\mathcal{U}_F$, with the induced dynamics on the surviving logical degrees of freedom represented by $M_F$.

In particular, consider evolution from a pure stabilizer state. Each measurement trajectory remains pure, so its stabilizer group has rank $N$ at all times. For $t\geq t^*$, a generating set can be chosen such that $N-k$ generators span the fixed plateau stabilizer group and the remaining $k$ generators specify a pure stabilizer state of the $k$ surviving logical qubits. The plateau stabilizer group remains fixed, while the remaining $k$ logical stabilizers evolve under repeated application of $M_F$.

\section{Rare-Region Separation of Entanglement and Purification in One Dimension}

\label{sec:one_dimension}
\begin{figure*}
    \centering
    \includegraphics[width=\textwidth]{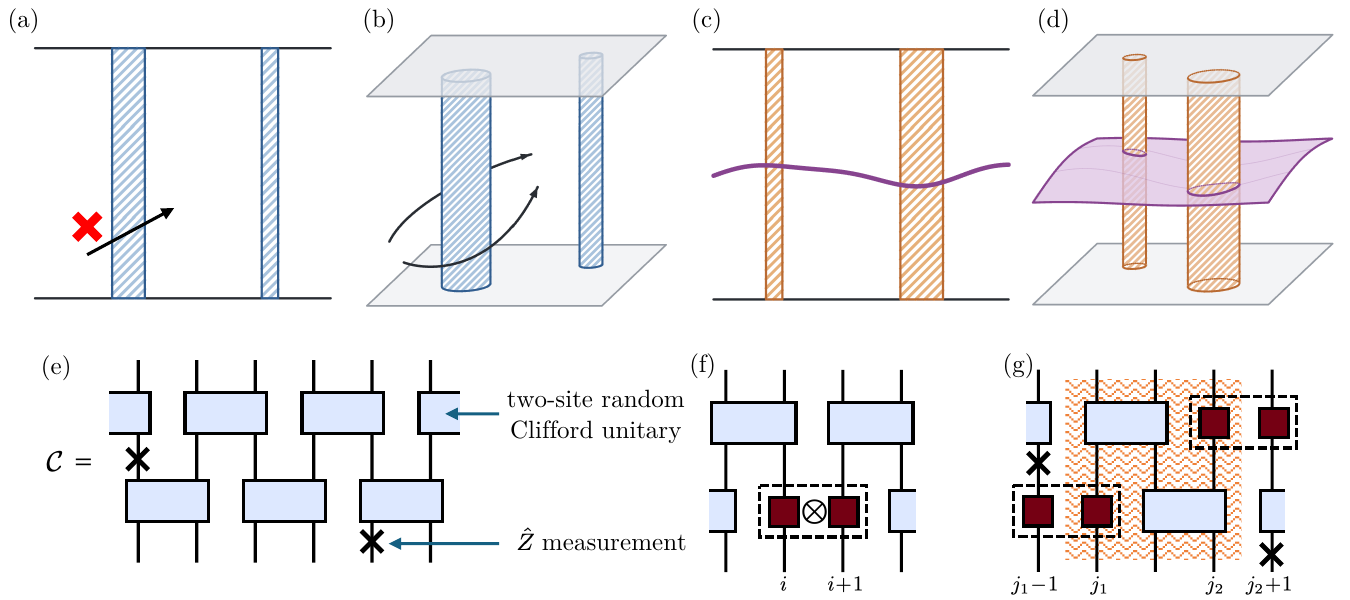}
    \caption{Persistent blockers and storage regions.
    (a) In one dimension, a finite blocker (blue) persists under Floquet repetition and prevents entanglement from being produced across it.
    (b) In two (or more) dimensions, entanglement can be produced along paths that avoid a finite blocker.
    (c) In one dimension, persistent storage regions (orange) form columns extending from the initial to the final time boundary. The entanglement membrane (purple), which represents the entanglement between the final system and the initial reference, and hence the entropy remaining in the final state, must cross each column.
    (d) The same obstruction persists in two (or more) dimensions, where the membrane must again cross each storage column.
    (e) One period of a simple uniform Clifford circuit consisting of alternating $\hat Z$-measurement layers and two-qubit Clifford layers on even and odd bonds.
    (f) A factorized two-qubit Clifford gate provides a simple explicit blocker.
    (g) A finite unmeasured region with factorized gates on both boundaries provides a simple explicit storage region.}
    \label{fig:one-dimension}
\end{figure*}

Having established the plateau structure and the associated logical dynamics, we now turn to a broad class of spatially random one-dimensional stabilizer Floquet circuits. A circuit realization consists of a finite-depth pattern of local Clifford gates and Pauli measurements sampled from a distribution with only short-range spatial correlations. Once sampled, the entire pattern is fixed and repeated exactly in every Floquet period.

The ensemble is controlled by a measurement-rate parameter $p$, with increasing $p$ favoring configurations containing more measurements. Small $p$ corresponds to a regime deep in the mixed phase, while large $p$ corresponds to a regime deep in the area-law entangled phase. We consider families for which the support of the distribution on any finite spatial region is independent of $p$ throughout $0<p<1$. Equivalently, any finite configuration of gates and measurements that has nonzero probability at one value of $p$ in this interval has nonzero probability at every other value. Consequently, any finite local pattern that occurs with nonzero probability at one value of $p$ remains allowed throughout $0<p<1$, although its probability may become very small.

Because the sampled circuit is repeated exactly in time, any rare spatial configuration recurs every period and therefore persists indefinitely. These persistent rare regions can affect entanglement and purification in qualitatively different ways. We study two standard probes of measurement-induced transitions. The first starts from a pure product state and diagnoses the late-time bipartite entanglement across a spatial cut. The second starts from the maximally mixed state and probes purification through the decay of the entropy of the full system. In spacetime-random monitored circuits these two probes diagnose the same transition, whereas below we argue that persistent rare regions separate them for all intermediate values of $p$ in the class of one-dimensional Floquet circuits considered here.

\subsection{Area-Law Entanglement from Persistent Blockers}

We call a finite spatial region a \textit{blocker} if repeated Floquet evolution cannot produce entanglement across it. Near the high-measurement-rate limit, where the system is deep in the area-law phase, we expect the local circuit distribution to contain a blocking configuration supported on a finite spatial region. Since each Floquet period has finite depth, specifying such a configuration within one period involves only finitely many gate and measurement locations. For Clifford gates and Pauli measurements, only finitely many patterns are possible on these locations, so any blocking pattern in the support has nonzero probability. By the support condition, the same blocking pattern remains allowed throughout $0<p<1$. It may become increasingly rare as $p$ decreases, but for short-range-correlated disorder it still occurs at nonzero spatial density. Exact Floquet repetition makes each such blocker persist for all times.

In one dimension, a finite blocker separates the chain, as illustrated in Fig.~\ref{fig:one-dimension}(a). A nonzero density of persistent blockers therefore partitions the chain into finite segments. Entanglement can grow within each segment but cannot be produced across the blockers, so the late-time bipartite entanglement across a typical cut remains finite. Thus the system remains in the area-law phase for all $p>0$. If instead one considers the largest entanglement value over all cuts, the longest blocker-free segment grows as $O(\log L)$, so this maximum can grow at most logarithmically with $L$.

This mechanism is specific to one spatial dimension. In two or more dimensions, a finite blocker does not separate space, and paths for entanglement production can detour around it, as illustrated in Fig.~\ref{fig:one-dimension}(b). A nonzero density of blockers therefore does not by itself enforce area-law entanglement, except in one dimension.

\subsection{Purification Obstructed by Persistent Storage Regions}
\label{sec:one_dimension:purification}

To analyze purification under the circuit dynamics, we use the entanglement-membrane picture, in which the entanglement entropy of a pure state is represented by the free energy of an interface extending through spacetime~\cite{Nahum_Haah_entanglementgrowth_2016,Jonay2018,Zhou2019,Zhou2020,sierant_turkeshi_membrane2023}. We represent the maximally mixed initial state as a pure state in which each system qubit is maximally entangled with a corresponding external reference qubit. The entropy remaining after evolution is then the entanglement between the final system and the reference. In the corresponding statistical-mechanics description~\cite{jian_ludwig_prb2020,bao_choi_altman_prb2020}, these boundary conditions force a membrane to span the system in the spatial direction, separating the final system from the initial reference. The membrane may wander in spacetime, but it must still extend across the system~\cite{Li_Vijay_Fisher_DPRE2023,Ha2026}.

For stabilizer circuits, the entanglement membrane is effectively at zero temperature~\cite{sommers2024}. Its free energy is therefore set by the minimum membrane energy, with no contribution from the configurational entropy of the membrane. At finite temperature, configurational entropy can lower the membrane free energy at long times, whereas this mechanism is absent at zero temperature.

We call a finite spatial region a \textit{storage region} if its repeated Floquet dynamics preserves at least one logical qubit of the initial state. Near the low-measurement-rate limit, where the system is deep in the mixed phase, we expect the local circuit distribution to contain at least one storage pattern supported on a finite spatial region. The same local-support argument used for blockers then implies a nonzero density of such storage regions for all $p<1$, although this density may become very small as $p$ increases. Exact Floquet repetition makes each storage region persistent in time.

In spacetime, each persistent storage region appears as a column extending from the initial to the final time boundary, as illustrated in Fig.~\ref{fig:one-dimension}(c,d). The membrane energy associated with crossing a storage column equals the quantum information carried by that column~\cite{sommers2024}. For a Clifford circuit, this amount is an integer number of logical qubits, so crossing any storage region costs at least one bit of membrane energy. Any membrane separating the final state from the initial reference must cross every storage column, in any spatial dimension. A nonzero density of storage regions therefore makes the minimum membrane energy, and hence the late-time entropy, extensive. Thus the system remains in the mixed phase for all $p<1$ in any finite spatial dimension.

\subsection{A Simple Example}

As one concrete example of how blockers and storage regions can arise, consider a simple brickwork Clifford Floquet circuit in one spatial dimension.  One Floquet period consists of a $\hat Z$-measurement layer, a layer of independently sampled two-qubit Clifford gates on even bonds, a second measurement layer, and a Clifford layer on odd bonds, as shown in Fig.~\ref{fig:one-dimension}(e). Each two-qubit gate is sampled uniformly from the Clifford group, and each qubit is independently measured with probability $p$ in each measurement layer. For any $0<p<1$, every finite pattern of gates and measurement locations therefore has nonzero probability, so the local support is independent of $p$. The sampled pattern is then fixed and repeated exactly every period.

A two-qubit gate that factorizes into single-qubit Clifford rotations provides an explicit blocker, as shown in Fig.~\ref{fig:one-dimension}(f). Of the $11{,}520$ two-qubit Clifford gates, $24^2=576$ factorize into products of single-qubit Clifford gates, so a uniformly sampled gate factorizes with probability $1/20$~\cite{KoenigSmolin2014}. The identity is the trivial example. Such a gate cannot generate entanglement across that bond, giving a nonzero density of persistent blockers even at $p=0$. In one dimension, these blockers enforce area-law entanglement for all $p$. More general one- and two-sided blocking walls provide additional examples~\cite{farshi2022,farshi2023}.

As another simple example, a storage region can be formed by a finite segment whose two boundary gates factorize and whose qubits are never measured during the period, as shown in Fig.~\ref{fig:one-dimension}(g). The factorized boundary gates decouple the segment from the rest of the chain, while the absence of measurements leaves its internal evolution unitary. A maximally mixed state on the segment therefore retains its entropy indefinitely. For any fixed segment length and any $p<1$, this configuration has nonzero probability and hence occurs at nonzero density. These storage regions give an extensive lower bound on the late-time entropy, so the system remains in the mixed phase for all $p<1$.

\section{Two-Dimensional Entanglement Transition}
\label{sec:two_dimensional_transition}

We now turn to two spatial dimensions. The storage region mechanism that obstructs purification in one dimension remains effective. For any $p<1$, the Floquet pattern has a nonzero probability of containing a finite storage region, for example an unmeasured region enclosed by blocking boundaries. Such regions occur at nonzero density and thus maintain the volume-law entropy of the system after it is initialized in the maximally-mixed state. 
The long-time limit of the full-system entropy therefore remains extensive in the thermodynamic limit, and the purification dynamics stay in the mixed phase, consistent with the numerical results in Appendix~\ref{app:2d_purification}.

The entanglement dynamics are qualitatively different. In one dimension, a blocking region disconnects the chain and prevents entanglement from being produced across it.  In two (or more) dimensions, entanglement can be produced along paths that go around dilute finite local blockers.  This suggests that at low blocker density, connected paths allow entanglement to spread throughout the system and produce a volume-law phase, whereas at high blocker density, these paths are cut off and the late-time entanglement becomes area-law. Thus, despite the absence of a purification transition, we expect the two-dimensional Clifford Floquet circuit to exhibit an entanglement transition between volume-law and area-law phases. We numerically establish this transition below using bipartite entanglement and antipodal mutual information.

\subsection{Model}

We consider qubits on an $L_x\times L_y$ square lattice, with $L_x$ and $L_y$ even. One Floquet period $\mathcal{C}_F$ consists of four layers of nearest-neighbor two-site Clifford unitaries, each followed by a measurement layer. In each measurement layer, every site is selected independently with probability $p$ and measured in the $\hat Z$ basis. The four measurement patterns and all Clifford gates are sampled once and then repeated in every Floquet period.

For each $\boldsymbol{\delta}\in\{\hat{x},\hat{y},-\hat{x},-\hat{y}\}$, in that order, one layer applies an independently and uniformly sampled two-site Clifford unitary to each pair of sites $(i,j)$ and $(i,j)+\boldsymbol{\delta}$ with $i+j$ even. Together, these four layers form a two-dimensional analogue of the brickwork circuit. Unless otherwise stated, we use periodic boundary conditions in both directions. Cylindrical boundary conditions are used for the bipartite-entanglement calculation below.

All pure-state entanglement ``observables'' are evaluated after a number of Floquet periods chosen to be safely beyond both the plateau time and the entanglement-saturation time. When an observable continues to oscillate at late times, we average it over the final few Floquet periods.

\subsection{Bipartite Entanglement}

\begin{figure}[t]
    \centering
    \includegraphics[width=0.5\textwidth]{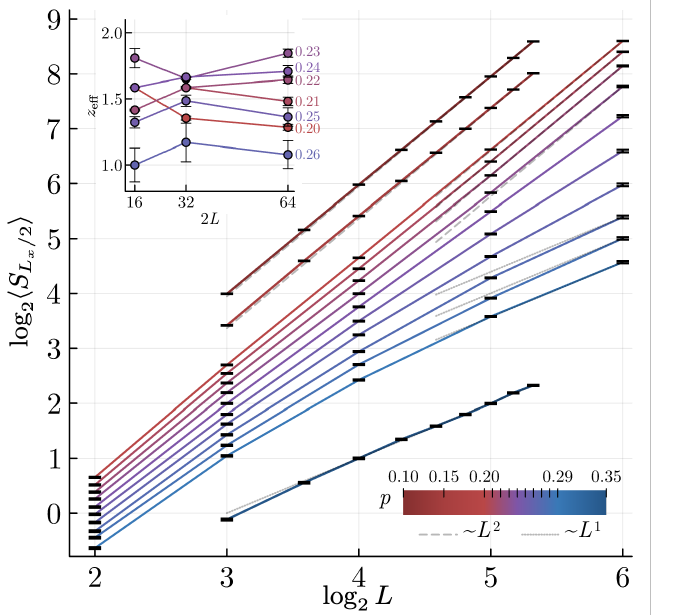}
    \caption{Bipartite entanglement in the two-dimensional Floquet monitored Clifford circuit. The main panel shows $\log_2\langle S_{L_x/2}\rangle$ versus $\log_2 L$ for square systems $L_x=L_y=L$ at different measurement rates $p$, where $\langle\cdots\rangle$ denotes an average over circuit realizations. Gray dashed and dotted lines indicate $L^2$ and $L$ scaling, respectively. The inset shows the effective dynamical exponent $z_{\mathrm{eff}}$ obtained from the median saturation time for measurement rates near the transition. }
    \label{fig:2d_bipartite}
\end{figure}

We first diagnose the transition using the late-time entanglement starting from a pure product state. We consider an $L_x\times L_y$ system with open boundary conditions in the $x$ direction and periodic boundary conditions in the $y$ direction, and measure the bipartite entanglement entropy $S_{L_x/2}$ across the cut at $x=L_x/2$. Deep in the two phases, the averaged saturated entropy is expected to scale as
\begin{align}
\langle S_{L_x/2}\rangle \sim
\begin{cases}
L_xL_y, & \text{volume-law phase},\\
L_y, & \text{area-law phase}.
\end{cases}
\label{eq:2d_midcut_scaling}
\end{align}

For square systems with $L_x=L_y=L$, Fig.~\ref{fig:2d_bipartite} shows a clear crossover from approximately $L^2$ scaling at low measurement rates to approximately $L$ scaling at high measurement rates, establishing volume-law and area-law phases on the two sides of the transition. The crossover occurs near $p\simeq0.23$--$0.25$, although the bipartite entropy alone does not determine $p_c$ precisely at the accessible system sizes.

We next examine the time required for the entanglement to reach its late-time plateau. For each circuit realization and time, the entanglement entropy takes integer values and at long times reaches an identifiable plateau, up to small $O(1)$ temporal fluctuations.
This allows us to assign a saturation time to each realization. Representative entanglement trajectories and the precise definition of the saturation time are given in Appendix~\ref{app:entanglement_growth}. We denote the median saturation time over circuit realizations by $t_{\mathrm{sat}}(L;p)$ and define
\begin{equation}
z_{\mathrm{eff}}(L,2L;p)=\frac{\log t_{\mathrm{sat}}(2L;p)-\log t_{\mathrm{sat}}(L;p)}{\log 2}~.
\label{eq:zeff}
\end{equation}

At an infinite-randomness critical point, activated dynamical scaling leads to an effective dynamical exponent that grows with system size and diverges in the thermodynamic limit~\cite{zabalo2023}. For measurement rates near the transition, however, $z_{\mathrm{eff}}$ remains below $2$ over the accessible sizes and exhibits only weak size dependence, as shown in the inset of Fig.~\ref{fig:2d_bipartite}. We therefore find no indication of infinite-randomness dynamical slowing within the system sizes accessible to our simulations.

\subsection{Antipodal Mutual Information}

\begin{figure}[t]
    \centering
    \includegraphics[width=0.5\textwidth]{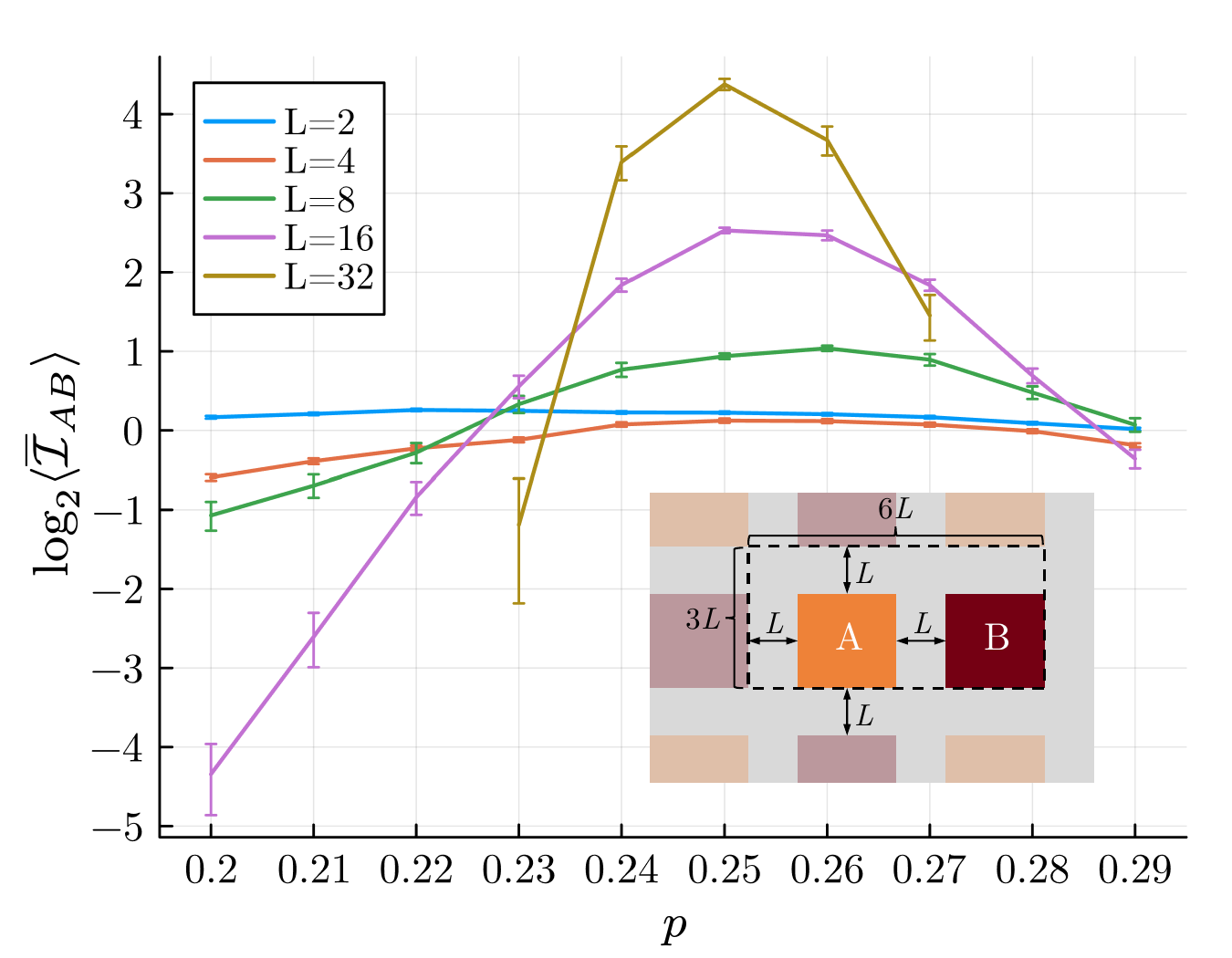}
    \caption{Antipodal mutual information in the two-dimensional Floquet monitored Clifford circuit. The main panel shows late-time-averaged $\log_2\langle\overline{\mathcal I}_{AB}\rangle$ as a function of measurement rate $p$ for different system sizes. A peak develops near the entanglement transition and grows strongly with system size, with a small finite-size drift in its position. The inset illustrates the AMI geometry: the regions $A$ and $B$ are $2L\times2L$ squares separated by a distance $L$ on all sides, and the dashed $3L\times6L$ rectangle defines the torus and the shifted periodic boundary conditions used.}
    \label{fig:AMI}
\end{figure}

To probe long-range entanglement, we consider the antipodal mutual information between two separated regions $A$ and $B$,
\begin{align}
\mathcal{I}_{AB}
=
S_A+S_B-S_{A\cup B}~.
\label{eq:2d_AMI}
\end{align}
We initialize the system in a pure product state and evolve it to times beyond the bipartite-entanglement saturation time for all sampled circuit realizations. We obtain the late-time AMI by averaging $I_{AB}(t)$ over the final few Floquet periods and denote this average by $\overline{\mathcal{I}}_{AB}$.
The AMI vanishes in the thermodynamic limit deep in both the volume-law and area-law phases, while finite systems develop a peak near the transition~\cite{LCF2019,ha2024}.

The geometry is shown in the inset of Fig.~\ref{fig:AMI}. The regions $A$ and $B$ are $2L\times2L$ squares separated by a distance $L$ on all sides. The primitive rectangle defining the torus has dimensions $3L\times6L$. Their combined size satisfies
\begin{align}
|A|+|B|
=
8L^2
<
\frac{18L^2}{2},
\label{eq:AMI_region_size}
\end{align}
so the two regions together occupy less than half of the system. This geometry is a two-dimensional analogue of the standard one-dimensional antipodal construction.

The numerical results are shown in Fig.~\ref{fig:AMI}. A pronounced peak develops near the entanglement transition and grows rapidly with system size. This is strikingly different from standard spacetime-random measurement-induced transitions, where the critical AMI approaches an $O(1)$ value, as discussed in Appendix~\ref{app:spacetime_random_AMI}. The observed growth is consistent with $L^{\alpha'}$ with $1<\alpha'<2$, although the accessible sizes do not determine the asymptotic scaling form. The peak position also drifts with system size, which prevents a precise determination of $p_c$ from the AMI.

Together with the saturation-time analysis, these results distinguish the present transition from both familiar possibilities. We find no indication of the strong dynamical slowing associated with infinite-randomness criticality, while the AMI scaling differs sharply from that of the standard spacetime-random MIPT.

\section{Localized and Delocalized Logical Dynamics}
\label{sec:deep_phases}

The plateau stabilizer group identifies the logical degrees of freedom that survive indefinitely, but does not determine their spatial dynamics. We now ask whether an initially local logical operator remains confined near its initial position or spreads over distances comparable to the system size.

\subsection{Logical-Support Radius}
\label{sec:logical_support_radius}

We probe these dynamics using logical Pauli operators that admit a representative supported on a single physical site. A Pauli operator $\hat P_r\in\{\hat X_r,\hat Y_r,\hat Z_r\}$ at site $r$ represents a single-site logical operator (SSLO) when
\begin{align}
    \hat P_r\in\mathrm{ISG}_\infty^\perp~,
    \qquad
    \hat P_r\notin\mathrm{ISG}_\infty~.
    \label{eq:SSLO_condition}
\end{align}

Let $[\hat P_r(t)]$ denote the logical operator obtained by evolving the initial SSLO $[\hat P_r]$ for $t$ Floquet periods. As established in Sec.~\ref{sec:logical_floquet_dynamics}, this evolution is generated by the logical Floquet unitary:
$[\hat P_r(t)]=[\mathcal{U}_F^t\hat P_r\mathcal{U}_F^{-t}]$.
Physical representatives of the same logical operator differ by multiplication of plateau stabilizers and can therefore have different supports.

The conventional code distance is the minimum weight among all nontrivial logical operators. For a specified logical operator, finding the minimum-weight physical representative is generally difficult. In one dimension, the \emph{contiguous code length} provides a geometric alternative based on the shortest contiguous interval that can support a representative of the logical operator~\cite{gullans2020,Bravyi2009,li2021qec}. Motivated by this construction, we introduce a spatial measure for individual logical operators that extends naturally to higher dimensions.

Consider a system of linear size $L$ with periodic boundary conditions in both spatial directions. For an SSLO initially located at site $r$, we define the ball of radius $R$ in the periodic Manhattan metric as
\begin{align}
    B_R(r)
    =
    \left\{
    r' :
    \sum_{\mu=x,y}
    \min\!\left(
    |r'_\mu-r_\mu|,
    L-|r'_\mu-r_\mu|
    \right)
    \le R
    \right\}.
    \label{eq:logical_ball}
\end{align}
Figure~\ref{fig:logical_support}(a) illustrates this geometry.
We define the \emph{logical-support radius} as the smallest $R$ for which some representative $\hat P_r(t)\hat S$, with $\hat S\in\mathrm{ISG}_\infty$, is supported entirely within $B_R(r)$. The center remains fixed at the initial site $r$, so a large radius means that no representative can be confined to a small neighborhood of that site. By construction, an SSLO has radius zero at $t=0$. Appendix~\ref{app:logical_support_radius} describes the numerical calculation.

\begin{figure}[t]
    \centering
    \includegraphics[width=0.5\textwidth]{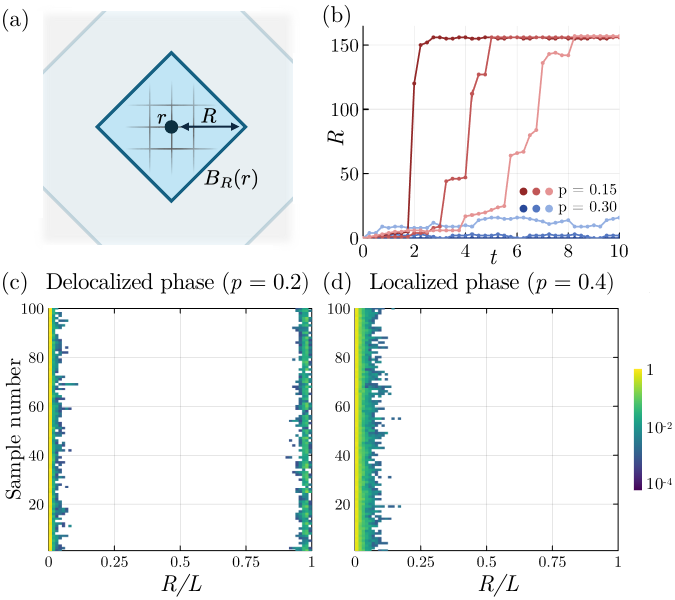}
    \caption{Logical-support radius in the localized and delocalized regimes.
    (a) Schematic of the definition. The ball $B_R(r)$ is centered at the initial site $r$ of the SSLO. The logical-support radius is the smallest $R$ for which the evolved logical operator has a physical representative supported within $B_R(r)$, allowing multiplication by plateau stabilizers. The lighter diamond illustrates a ball with radius comparable to the system size.
    (b) Evolution of the logical-support radius for selected SSLOs in two $L=160$ circuit realizations, one at each measurement rate. At $p=0.15$, the three red curves show delocalized SSLOs that reach system-scale radii. 
    Their growth occurs through abrupt jumps separated by intervals of slower evolution. At $p=0.3$, the three blue curves show localized SSLOs whose radii remain small.
    (c) Distribution of the late-time normalized radius $R/L$ for $L=240$ and $p=0.2$. Each row corresponds to one circuit realization and shows the probability distribution over its SSLOs at one late time. 
    Weight appears both near $R/L=0$ and near the system scale $R/L=1$.
    (d) Corresponding distributions for $L=120$ and $p=0.4$, where the radii remain concentrated near $R/L=0$. In (c) and (d), color denotes the probability on a logarithmic scale.}
    \label{fig:logical_support}
\end{figure}

\subsection{Localized and Delocalized Logical Support}
\label{sec:localized_delocalized_support}

For each disorder realization, we identify all SSLOs and follow their logical-support radii in time. After the initial transient, temporal fluctuations are typically small over the observation window. We therefore characterize each SSLO by its late-time radius and construct the distribution of these radii within each realization.

Deep in the volume-law phase [Fig.~\ref{fig:logical_support}(c)], these distributions are often bimodal, with weight at both small and system-wide radii. Localized and delocalized SSLOs thus coexist within individual disorder realizations. For the delocalized SSLOs, no physical representative of the evolved logical operator can be confined to a small neighborhood of its initial site, except at exponentially late and rare ``recurrence'' times. Deep in the area-law phase [Fig.~\ref{fig:logical_support}(d)], the late-time radii remain concentrated at small values, with no comparable population at the system scale.

To resolve how these late-time radii are reached, we calculate the radius after each of the four substeps of a Floquet period, giving a time resolution of one quarter period. In the localized phase, the radius remains small throughout the observed evolution. Fig.~\ref{fig:logical_support}(b) shows some example trajectories in the delocalized phase, where the radius eventually reaches the scale of the entire system. The radius shows abrupt, step-like increases rather than smooth ballistic growth. This differs from the operator spreading typically found in generic local unitary dynamics without measurements. These abrupt expansions make the spreading resemble teleportation~\cite{bao2024,Shkolnik2025}.

\section{Logical space localization transition}
\label{sec:logical_krylov}

\begin{figure*}[t]
    \centering
    \includegraphics[width=\textwidth]{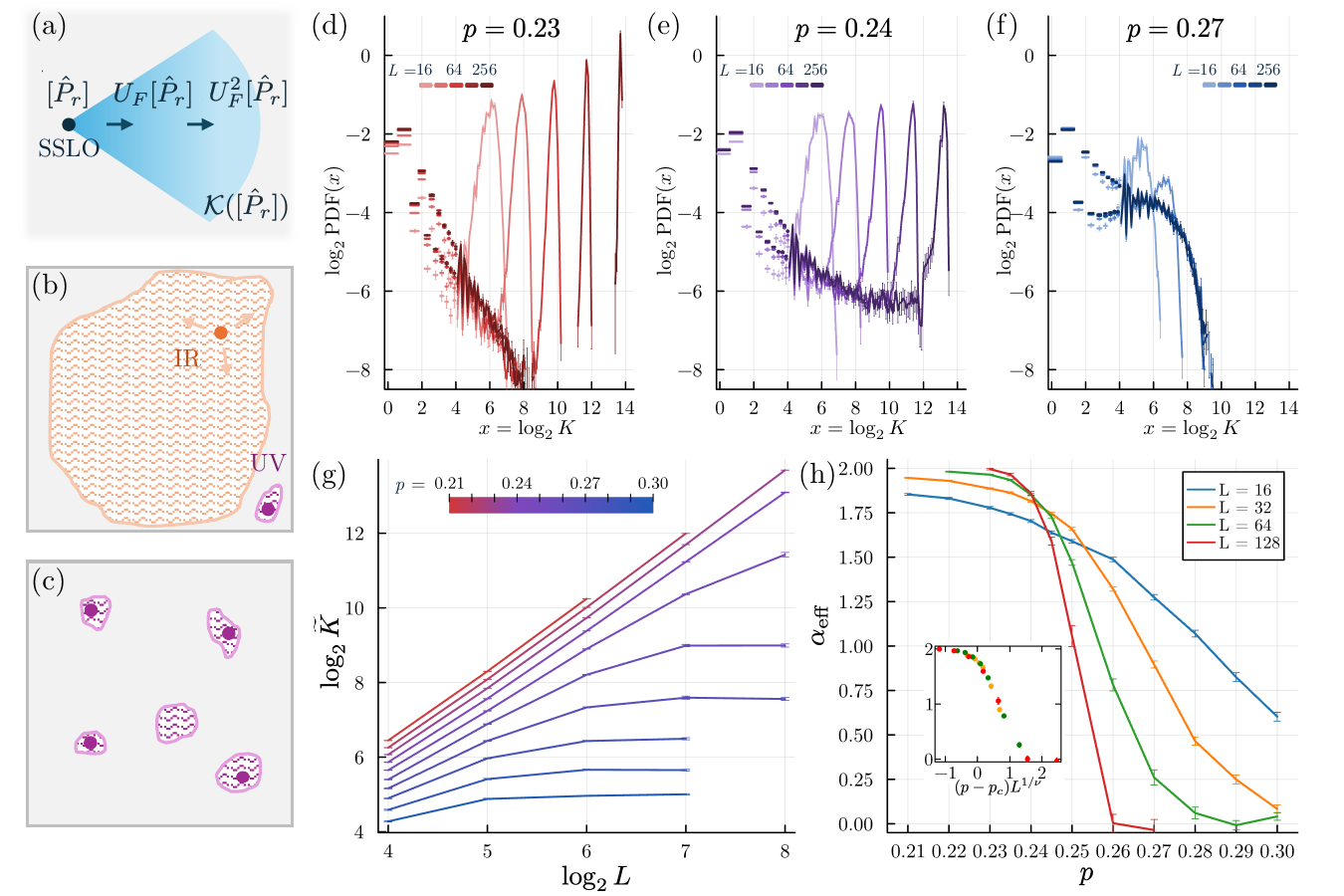}
    \caption{Logical space localization transition from Krylov dynamics. (a) Repeated Floquet evolution of a SSLO generates a logical Krylov subspace $\mathcal{K}([\hat P_r])$ of dimension $K$. (b),(c) Schematic of the delocalized and localized phases. Dots denote SSLOs and shaded regions illustrate the logical degrees of freedom explored by their evolution. The delocalized phase contains both finite UV contributions and a growing IR contribution. The localized phase contains only finite UV contributions. Logical subspaces containing no SSLO are not directly sampled by this probe. (d)--(f) Krylov dimension distributions at $p=0.23$, $0.24$, and $0.27$. The component at large $K$ shifts to larger values with increasing $L$ in the delocalized phase and near the transition. In the localized phase the distribution becomes independent of system size. The histograms show the probability density of $x=\log_2K$ using bins of width $0.1$. These bins resolve individual integer values of $K$ at small $K$. (g) The moment ratio $\widetilde K=\langle K^2\rangle/\langle K\rangle$ approaches extensive scaling in the delocalized phase and saturates in the localized phase. (h) Effective exponent $\alpha_{\mathrm{eff}}(L,2L;p)$ with $L$ denoting the smaller size in each pair. The curves flow toward $2$ at lower $p$ and toward $0$ at higher $p$. Their size dependence is weak near $p\simeq0.24$. The inset shows a scaling collapse of the three largest size pairs using $p_c=0.243$ and $\nu=1.1$.} 
    \label{fig:krylov_main}
\end{figure*}

\subsection{Logical Krylov space and distributions}
\label{sec:krylov_distribution}

The logical support radius introduced in Sec.~\ref{sec:deep_phases} distinguishes logical operators that remain near their initial positions from those that spread over distances comparable to the system size. We now characterize the logical degrees of freedom explored by this evolution through the number of independent logical Pauli operators generated from a local seed SSLO.

Let $\hat P_r$ be a SSLO and $[\hat P_r]$ its equivalence class in the surviving logical Pauli space $\mathcal{L}_\infty$. Its logical Krylov subspace is
\begin{equation}
    \mathcal{K}([\hat P_r])=\mathrm{span}_{\mathbb{F}_2}\left\{[\hat P_r],\mathcal{U}_F[\hat P_r],\mathcal{U}_F^2[\hat P_r],\ldots\right\},
    \label{eq:logical_krylov_subspace}
\end{equation}
with dimension $K(\hat P_r)=\dim_{\mathbb{F}_2}\mathcal{K}([\hat P_r])$. Note that this is not the usual operator Krylov construction~\cite{parker2019, dymarsky2020, nandy2025}.  Instead, we are viewing the logical operators that are produced by the dynamics as the generators of a logical subspace.  Thus $K$ counts the dimension of the logical subspace that can be dynamically generated from $\hat P_r$ under repeated Floquet evolution and by taking products of the operators so-generated. Equivalently, if $v\in\mathbb{F}_2^{2k}$ represents $[\hat P_r]$ and $M_F$ represents the logical Floquet action, then $\mathcal{K}(v)=\mathrm{span}_{\mathbb{F}_2}\{v,M_Fv,M_F^2v,\ldots\}$. Since $\dim_{\mathbb{F}_2}\mathcal{L}_\infty=2k$ we have $K\leq2k$. Fig.~\ref{fig:krylov_main}(a) illustrates this construction.

If the logical support radius of an SSLO remains bounded independently of $L$ throughout its evolution then only a finite number of independent logical Pauli operators can be dynamically generated from it. Operators that spread across the system can instead generate a number that grows with $L$. We use UV and IR to denote contributions associated with finite and growing scales in both the spatial dynamics and the Krylov distribution. The coexistence of confined and spreading operators in the delocalized phase suggests both UV and IR contributions [Fig.~\ref{fig:krylov_main}(b)]. Logical space localization corresponds to the disappearance of the delocalized IR contribution [Fig.~\ref{fig:krylov_main}(c)].

We compute the distribution $P(K;L,p)$ by first normalizing the distribution of $K$ over the SSLOs within each disorder realization and then averaging over realizations. Each realization containing at least one SSLO carries equal statistical weight regardless of its number of SSLOs. Realizations containing no SSLO are excluded. Their frequency decreases rapidly with system size and none are observed for sufficiently large systems. This distribution characterizes the logical dynamics accessible from SSLOs.

In the delocalized phase [Fig.~\ref{fig:krylov_main}(d)] the distribution is strongly bimodal. The component at small $K$ approaches a distribution that is independent of system size. The peak at large $K$ shifts to larger values and becomes narrower relative to its position. These features distinguish the localized UV contribution from the delocalized IR contribution.

Near the transition [Fig.~\ref{fig:krylov_main}(e)] the bimodal structure persists. The characteristic values of $K$ in the component at large $K$ grow more slowly than $L^2$ over the accessible sizes. Its shape in the logarithmically-binned representation changes little for the largest sizes. This is consistent with its width and characteristic Krylov scale having the same size dependence.

In the localized phase [Fig.~\ref{fig:krylov_main}(f)] the delocalized IR component disappears. The full distribution approaches a distribution that is independent of system size $L$ and has an approximately exponential tail. The characteristic Krylov dimensions therefore remain finite.

We describe these observations through the phenomenological decomposition
\begin{equation}
    P(K;L,p)
    \simeq [1-W]P_{\mathrm{UV}}(K)
    +W P_{\mathrm{IR}}(K)~,
    \label{eq:krylov_uv_ir}
\end{equation}
where $W\equiv W(L,p)$ is the probability weight of the IR component. Both component distributions are normalized and their dependence on $L$ and $p$ is left implicit. Where the two components coexist both $W$ and $1-W$ remain of order unity as $L$ increases. The first two UV moments remain $O(1)$. We do not assume a specific shape for $P_{\mathrm{UV}}$.

We describe the IR component using the phenomenological form
\begin{equation}
    P_{\mathrm{IR}}(K;L,p)
    \simeq\frac{1}{\sigma_K(L,p)}
    f_{\mathrm{IR}}\!\left(
        \frac{K-K_*(L,p)}{\sigma_K(L,p)}
    \right),
    \label{eq:krylov_ir_distribution}
\end{equation}
where $K_*$ and $\sigma_K$ are the mean and standard deviation of the IR component. The scaling function $f_{\mathrm{IR}}$ is normalized with zero mean and unit variance. Its shape likely differs between the delocalized and critical regimes.  In the delocalized phase we expect $K_*\sim L^2$ while $\sigma_K\sim L$. The IR distribution therefore becomes narrow relative to its mean.  At the transition the data are consistent with $K_*\sim\sigma_K\sim L^\alpha$ with $\alpha<2$ over the accessible sizes. Separating the UV and IR components becomes more difficult in this regime. We therefore seek a measure of the growing IR scale that can be computed directly from the full distribution.

\subsection{Critical behavior at the logical space localization transition}
\label{sec:krylov_transition}

We define the moment ratio
\begin{equation}
    \widetilde K(L,p)
    \equiv\frac{\langle K^2\rangle}{\langle K\rangle},
    \label{eq:krylov_moment_ratio}
\end{equation}
where the moments are taken with respect to $P(K;L,p)$. For sufficiently large systems the IR component dominates both moments provided that its weight $W$ does not decrease too rapidly with system size. Near the transition, this condition is supported numerically by the growth of $\langle K\rangle$ with $L$, as shown in Appendix~\ref{app:krylov_mean} and Fig.~\ref{fig:krylov_mean}.
The IR weight cancels to leading order in the ratio so that $\widetilde K\sim K_*$. This holds both in the delocalized phase where $\sigma_K\ll K_*$ and near the transition where $\sigma_K\sim K_*$. Thus $\widetilde K$ gives a measure of the values of $K$ in the IR part of the distribution without requiring an explicit separation of the UV and IR components.  In the localized phase both moments remain finite and $\widetilde K$ saturates.

Fig.~\ref{fig:krylov_main}(g) shows $\log_2\widetilde K$ versus $\log_2L$. In the delocalized phase $\widetilde K$ approaches the extensive scaling $\widetilde K\sim L^2$. In the localized phase it saturates with increasing system size. Near the transition $\widetilde K$ continues to grow over the accessible sizes with an exponent smaller than $2$.

We quantify this growth using the effective exponent between successive sizes
\begin{equation}
    \alpha_{\mathrm{eff}}(L,2L;p)
    =\log_2\!\left[
        \frac{\widetilde K(2L,p)}{\widetilde K(L,p)}
    \right].
    \label{eq:krylov_alpha_eff}
\end{equation}
If $\widetilde K\sim L^\alpha$ then $\alpha_{\mathrm{eff}}\to\alpha$. The delocalized and localized phases therefore correspond to limiting values of $2$ and $0$.

Fig.~\ref{fig:krylov_main}(h) shows how $\alpha_{\mathrm{eff}}$ evolves with increasing system size. It increases toward $2$ for $p<0.24$ and decreases toward $0$ for $p>0.24$. The pair $(L,2L)=(16,32)$ shows strong finite-size effect. At $p=0.24$ the three largest size pairs give nearly the same value $\alpha_{\mathrm{eff}}\approx1.85$. These trends suggest that the transition is near $p_c\approx0.24$.

To estimate the critical point and the correlation length exponent we consider the scaling form
\begin{equation}
    \alpha_{\mathrm{eff}}(L,2L;p)
    \simeq\mathcal{G}\!\left[(p-p_c)L^{1/\nu}\right].
    \label{eq:krylov_alpha_scaling}
\end{equation}
The inset of Fig.~\ref{fig:krylov_main}(h) shows a collapse of the three largest size pairs over $0.23\leq p\leq0.27$ using $p_c=0.243$ and $\nu=1.1$. The data do not determine $\nu$ precisely and a broad range of values give comparable collapse quality, so this only gives a preliminary characterization of its critical scaling. The displayed value is compatible with the Harris criterion $\nu\geq1$ for disorder that is random in two spatial dimensions~\cite{harris1974}.

The surviving logical space $\mathcal{L}_\infty$ remains extensive on both sides of the transition. In the volume-law phase the IR Krylov scale grows as $L^2$ while finite UV scales persist. In the area-law phase the growing IR contribution is absent and the characteristic Krylov dimensions remain finite. Together with the spatial localization and delocalization established in Sec.~\ref{sec:deep_phases}, these results identify the entanglement transition as a localization transition within the surviving logical space.

\section{Discussion}
\label{sec:discussion}

Repeating the same Clifford unitaries and Pauli measurements every period produces a striking separation between entanglement and purification. In conventional spacetime-random monitored circuits, the two probes diagnose the same measurement-induced phase transition. Here, the repeated circuit makes rare spatial structures persistent. In one dimension, persistent blockers enforce area-law entanglement, while isolated weakly-measured regions retain a finite entropy density. In two (or more) dimensions, entanglement can be produced along paths that avoid local blockers, producing a volume-law to area-law entanglement transition, while the late-time entropy remains extensive for every $p<1$ if the system is initialized in the maximally mixed state, so there is no purification transition.

The entanglement critical point itself also shows unusual behavior. In the corresponding spacetime-random circuit, the AMI peak remains below one over the accessible sizes and is consistent with an $O(1)$ critical value [e.g., Fig.~\ref{fig:AMI_random}]. In the Floquet circuit, the peak instead grows rapidly with $L$ and reaches values much larger than one. The entanglement saturation time provides a complementary dynamical probe. The corresponding effective dynamical exponent $z_{\mathrm{eff}}$ shows only weak size dependence near the transition, giving no indication of the rapidly growing dynamical exponent associated with infinite-randomness criticality.

The plateau stabilizer group and the induced logical Floquet map provide a natural interpretation of these results. The plateau stabilizer group specifies the stabilizer constraints imposed by the repeated monitored circuit, while its associated logical space contains the information that remains unpurified. The logical Floquet map then determines how these surviving logical degrees of freedom evolve. In two (or more) dimensions, the dimension of the surviving logical space remains extensive across the entanglement transition, while its dynamics change from delocalized to localized. The transition can therefore be viewed as a localization transition within an extensive surviving logical space.

One possible challenge for future work about this phase transition is to try to find a model and/or other metrics with substantially weaker finite-size effects, so that the critical behavior such as exponents can be more precisely determined.

An important question is how this picture changes away from the Clifford limit. At the Clifford point, entropies are integer-valued and the evolution closes exactly on stabilizer groups. Together with exact repetition of the same circuit, this discrete structure permits perfectly blocking regions and exactly protected logical information. A generic non-Clifford perturbation removes this property. Entropy can then decrease by arbitrarily small amounts, while entanglement can increase by arbitrarily small amounts across a weakly broken blocker. The strictly nonpurifying plateau may therefore remain long lived without being permanent. For the same reason, the exact blocker argument for area-law entanglement in one dimension no longer applies.

The logical-space scaling suggests a possible form for the resulting purification time. In the volume-law phase, the dynamically connected logical sector is extensive and scales as $\sim L^2$ in two dimensions.  If a weak non-Clifford (``magic'') perturbation produces a leakage time that is exponential in the size of this sector, one would obtain
\begin{align}
    t_{\mathrm{pur}}\sim \exp(cL^2)~.
\end{align}
Near the transition, our Krylov data instead suggest a subextensive scale
$\widetilde K\sim L^\alpha$ with $\alpha<2$.
If this logical-sector size controls the same leakage process, the purification time would scale as
\begin{align}
    t_{\mathrm{pur}}\sim \exp(cL^\alpha)~.
\end{align}

This last form is reminiscent of activated scaling at an infinite-randomness critical point, which is what this critical point might cross over to for such non-Clifford Floquet monitored systems.
The long purification time proposed above would arise from lifting the exact protection of the Clifford limit which has $t_{\mathrm{pur}}=\infty$. Whether this scaling survives generic non-Clifford perturbations remains an open question. It will also be interesting to determine whether the exponent $\alpha$ is related to the anomalous growth of the critical antipodal mutual information.

\begin{acknowledgments}

We thank Soonwon Choi, Sarang Gopalakrishnan, Michael Gullans, Nicholas O'Dea and Grace Sommers for discussions and for related collaborations.  This work was supported in part by NSF QLCI grant OMA-2120757. H.H. was supported in part by the Gordon and Betty Moore Foundation through Grant GBMF13896 to the MIT Quantum Initiative (QMIT). The simulations presented in this work were performed on computational resources managed and supported by Princeton Research Computing, a consortium of groups including the Princeton Institute for Computational Science and Engineering (PICSciE) and Research Computing at Princeton University. The open-source \texttt{QuantumClifford.jl} package was used to simulate Clifford circuits~\cite{QuantumClifford}. ChatGPT (OpenAI) was used to assist with the implementation and debugging of parts of the numerical code.

\end{acknowledgments}

\appendix

\section{Inclusion under Monitored Clifford Evolution}
\label{app:inclusion_property}

For completeness, we give a direct proof that monitored Clifford evolution preserves inclusion of stabilizer groups modulo Pauli phases. This property underlies the plateau construction reviewed in Sec.~\ref{sec:plateau_logical}.

Let $\mathcal{S}_1$ and $\mathcal{S}_2$ be stabilizer groups satisfying
\begin{align}
    \mathcal{S}_1 \subseteq \mathcal{S}_2.
\end{align}
It is sufficient to verify that each elementary operation in a monitored Clifford circuit preserves this inclusion.

A Clifford unitary maps Pauli operators by conjugation and therefore trivially preserves inclusion:
\begin{align}
    \mathcal{U}[\mathcal{S}_1]
    \subseteq
    \mathcal{U}[\mathcal{S}_2].
\end{align}

Now consider a measurement of a Pauli operator $P$. Working modulo stabilizer signs, the standard stabilizer measurement update is~\cite{gottesman1997,hastings2021,aasen2023}
\begin{align}
    \mathcal{M}_P[\mathcal{S}]
    =
    \left(\mathcal{S}\cap P^\perp\right)
    +
    \langle P\rangle,
\end{align}
where $P^\perp$ denotes the set of Pauli operators commuting with $P$, and $+$ denotes the subgroup generated by its two arguments. Since
\begin{align}
    \mathcal{S}_1\cap P^\perp
    \subseteq
    \mathcal{S}_2\cap P^\perp,
\end{align}
we have
\begin{align}
    \left(\mathcal{S}_1\cap P^\perp\right)+\langle P\rangle
    \subseteq
    \left(\mathcal{S}_2\cap P^\perp\right)+\langle P\rangle.
\end{align}
Thus,
\begin{align}
    \mathcal{M}_P[\mathcal{S}_1]
    \subseteq
    \mathcal{M}_P[\mathcal{S}_2].
\end{align}

Both Clifford unitaries and Pauli measurements therefore preserve inclusion. Since a monitored Clifford circuit $\mathcal{C}$ is a composition of these operations,
\begin{align}
    \mathcal{C}[\mathcal{S}_1]
    \subseteq
    \mathcal{C}[\mathcal{S}_2],
\end{align}
which establishes Eq.~\eqref{eq:inclusion_property}.

\section{Normal Form for Monitored Clifford Circuits}
\label{app:normal_form}

To construct the effective Floquet unitary, we express a monitored Clifford circuit as a Clifford unitary followed by a commuting stabilizer measurement. The equality is understood as an equality of maps on arbitrary input stabilizer groups modulo Pauli phases. Related canonical forms have been developed for stabilizer operations viewed as quantum channels and for postselected stabilizer computations~\cite{heimendahl2022,beverland2020}.

For a stabilizer group $\mathcal{S}$, $\mathcal{M}_{\mathcal{S}}$ denotes measurement of an independent generating set of $\mathcal{S}$. For $\mathcal{S}=\langle P\rangle$, we write $\mathcal{M}_P\equiv\mathcal{M}_{\langle P\rangle}$. The generators commute and may therefore be measured in any order. Measurement outcomes determine only the signs of the measured stabilizers and are ignored in our representation.

\begin{theorem}[Monitored Clifford Normal Form]
\label{thm:normal_form}
Let $\mathcal{C}$ be a circuit composed of Clifford unitaries and Pauli measurements. Then there exist an effective Clifford unitary $\mathcal{U}_{\mathrm{eff}}$ and stabilizer groups $\mathcal{S}_{\mathrm{i}}$ and $\mathcal{S}_{\mathrm{f}}$, related by
\begin{align}
    \mathcal{S}_{\mathrm{f}}
    =
    \mathcal{U}_{\mathrm{eff}}[\mathcal{S}_{\mathrm{i}}],
\end{align}
such that, for every input stabilizer group $\mathcal{S}_0$,
\begin{align}
    \mathcal{C}[\mathcal{S}_0]
    &=
    \left(
        \mathcal{M}_{\mathcal{S}_{\mathrm{f}}}
        \circ
        \mathcal{U}_{\mathrm{eff}}
    \right)[\mathcal{S}_0],
    \label{eq:unitary_measurement_normal_form}
    \\
    \mathcal{C}[\mathcal{S}_0]
    &=
    \left(
        \mathcal{U}_{\mathrm{eff}}
        \circ
        \mathcal{M}_{\mathcal{S}_{\mathrm{i}}}
    \right)[\mathcal{S}_0].
    \label{eq:measurement_unitary_normal_form}
\end{align}
The effective Clifford unitary $\mathcal{U}_{\mathrm{eff}}$ is generally not unique.
\end{theorem}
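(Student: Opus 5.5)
We argue by induction on the number of elementary operations in $\mathcal{C}$, building both forms at once with a single $\mathcal{U}_{\mathrm{eff}}$. First, Eqs.~\eqref{eq:unitary_measurement_normal_form} and \eqref{eq:measurement_unitary_normal_form} are equivalent whenever $\mathcal{S}_{\mathrm{f}}=\mathcal{U}_{\mathrm{eff}}[\mathcal{S}_{\mathrm{i}}]$. The reason is the conjugation identity $\mathcal{U}\circ\mathcal{M}_{\mathcal{S}}=\mathcal{M}_{\mathcal{U}[\mathcal{S}]}\circ\mathcal{U}$, which follows directly from the update rule $\mathcal{M}_P[\mathcal{S}]=(\mathcal{S}\cap P^\perp)+\langle P\rangle$ of Appendix~\ref{app:inclusion_property}, since conjugation preserves commutation, intersections and generated subgroups. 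It therefore suffices to establish the measurement-after-unitary form \eqref{eq:unitary_measurement_normal_form}. The base case is the empty circuit, with $\mathcal{U}_{\mathrm{eff}}=\mathbb{I}$ and $\mathcal{S}_{\mathrm{f}}=\langle\hat{\mathbb{I}}\rangle$.

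For the inductive step, suppose $\mathcal{C}=\mathcal{M}_{\mathcal{S}_{\mathrm{f}}}\circ\mathcal{U}_{\mathrm{eff}}$ and append one operation. Appending a Clifford $V$ is immediate: $V\circ\mathcal{M}_{\mathcal{S}_{\mathrm{f}}}\circ\mathcal{U}_{\mathrm{eff}}=\mathcal{M}_{V[\mathcal{S}_{\mathrm{f}}]}\circ(V\mathcal{U}_{\mathrm{eff}})$. The real work is appending a Pauli measurement $\mathcal{M}_P$, and we split this into three cases.

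\emph{Case 1: $P\in\mathcal{S}_{\mathrm{f}}$ up to phase.} Every output already contains $P$, so $\mathcal{M}_P$ acts trivially and nothing changes.

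\emph{Case 2: $P$ commutes with all of $\mathcal{S}_{\mathrm{f}}$ but is not in it.} Commuting measurements compose, so $\mathcal{M}_P\circ\mathcal{M}_{\mathcal{S}_{\mathrm{f}}}=\mathcal{M}_{\mathcal{S}_{\mathrm{f}}+\langle P\rangle}$. This needs a short check from the update rule: for any $\mathcal{T}$, $((\mathcal{T}\cap\mathcal{S}_{\mathrm{f}}^\perp)+\mathcal{S}_{\mathrm{f}})\cap P^\perp+\langle P\rangle = (\mathcal{T}\cap(\mathcal{S}_{\mathrm{f}}+P)^\perp)+\mathcal{S}_{\mathrm{f}}+\langle P\rangle$, which uses that $\mathcal{S}_{\mathrm{f}}\subseteq P^\perp$.

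\emph{Case 3: $P$ anticommutes with some element of $\mathcal{S}_{\mathrm{f}}$.} This is the main obstacle, and here the unitary must change. Choose generators $g_1,\dots,g_m$ of $\mathcal{S}_{\mathrm{f}}$ such that only $g_1$ anticommutes with $P$; this can be arranged by multiplying the others by $g_1$. The standard fact to use is that measuring $P$ on any state stabilized by $g_1$ acts on the stabilizer group exactly like conjugation by the Clifford $W=(g_1+P)/\sqrt2$, up to signs. Indeed, $W$ maps $g_1\mapsto \pm P$ and fixes everything commuting with both $g_1$ and $P$. We must then verify, via the update rule, that
\[
\mathcal{M}_P\circ\mathcal{M}_{\mathcal{S}_{\mathrm{f}}}=\mathcal{M}_{W[\mathcal{S}_{\mathrm{f}}]}\circ W
\]
as maps on arbitrary groups $\mathcal{T}$. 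Here $W[\mathcal{S}_{\mathrm{f}}]=\langle P,g_2,\dots,g_m\rangle$. An element of $\mathcal{T}$ commuting with $g_1,\dots,g_m$ gets fixed or multiplied by $g_1$ under $W$, and in the latter case it is then absorbed. The careful bookkeeping is to check that the elements of $\mathcal{T}$ surviving the two procedures agree modulo the final group. Concretely, the left side keeps $\mathcal{T}\cap\mathcal{S}_{\mathrm{f}}^\perp\cap P^\perp$, while the right side keeps $W[\mathcal{T}]\cap W[\mathcal{S}_{\mathrm{f}}]^\perp=W[\mathcal{T}\cap\mathcal{S}_{\mathrm{f}}^\perp]$. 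These must agree modulo $\langle P,g_2,\ldots\rangle$, and reconciling them, in particular handling the elements of $\mathcal{T}$ that anticommute with $P$ but commute with $\mathcal{S}_{\mathrm{f}}$, is the step we expect to be delicate. We then set $\mathcal{U}_{\mathrm{eff}}\to W\mathcal{U}_{\mathrm{eff}}$ and $\mathcal{S}_{\mathrm{f}}\to W[\mathcal{S}_{\mathrm{f}}]$.

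Finally, define $\mathcal{S}_{\mathrm{i}}=\mathcal{U}_{\mathrm{eff}}^{-1}[\mathcal{S}_{\mathrm{f}}]$ to obtain \eqref{eq:measurement_unitary_normal_form}. Non-uniqueness follows because $g_1$ could be chosen in several ways, or because $\mathcal{U}_{\mathrm{eff}}$ can be composed with any Clifford acting as the identity on the relevant groups.
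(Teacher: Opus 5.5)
Your overall architecture matches the paper's. The paper first commutes all Clifford unitaries to the right and then absorbs the measurements one at a time via a single-measurement-reduction lemma, $\mathcal{M}_P\circ\mathcal{M}_{\mathcal{S}}=\mathcal{M}_{\mathcal{S}'}\circ\mathcal{U}$, using exactly your three cases and the same $W=(g_1+P)/\sqrt{2}$; your induction just interleaves those two steps. Your Cases 1 and 2, the Clifford step, and the passage to $\mathcal{S}_{\mathrm{i}}$ are fine. The gap is Case 3, the only nontrivial step: you leave it unverified, and the bookkeeping you sketch would make the identity look false. The left side does not keep only $\mathcal{T}\cap\mathcal{S}_{\mathrm{f}}^\perp\cap P^\perp$. $\mathcal{M}_P$ acts on $\mathcal{T}_1=(\mathcal{T}\cap\mathcal{S}_{\mathrm{f}}^\perp)+\mathcal{S}_{\mathrm{f}}$, which contains $g_1$. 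So for every $t\in\mathcal{T}\cap\mathcal{S}_{\mathrm{f}}^\perp$ anticommuting with $P$, the product $tg_1$ commutes with $P$ and survives. On the right, such a $t$ maps to $W[t]=tg_1P$ (not $tg_1$ as you wrote), which agrees with $tg_1$ modulo $P$. Concretely, with $\mathcal{S}_{\mathrm{f}}=\langle Z_1\rangle$, $P=X_1$, $\mathcal{T}=\langle Z_1Z_2\rangle$, both sides give $\langle X_1,Z_2\rangle$, whereas your accounting of the left side yields only $\langle X_1\rangle$.

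The missing idea is to stop tracking $\mathcal{T}$ and instead show that $\mathcal{M}_P$ and $W$ act identically on every stabilizer group $\mathcal{T}_1\supseteq\mathcal{S}_{\mathrm{f}}$. Extend $g_1,\ldots,g_m$ to an independent generating set $g_1,\ldots,g_m,h_{m+1},\ldots,h_n$ of $\mathcal{T}_1$. Each $h_j$ commutes with $g_1$ because $\mathcal{T}_1$ is abelian, and replacing $h_j\to g_1h_j$ where needed makes it commute with $P$ as well. Measuring $P$ then simply replaces $g_1$ by $P$, and $W$ does the same while fixing every other generator, so $\mathcal{M}_P[\mathcal{T}_1]=W[\mathcal{T}_1]=\langle P,g_2,\ldots,g_m,h_{m+1},\ldots,h_n\rangle$. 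Hence $\mathcal{M}_P\circ\mathcal{M}_{\mathcal{S}_{\mathrm{f}}}=W\circ\mathcal{M}_{\mathcal{S}_{\mathrm{f}}}=\mathcal{M}_{W[\mathcal{S}_{\mathrm{f}}]}\circ W$, the last step by the conjugation identity you already established. This is precisely how the paper closes Case 3.
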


\subsection{Single Measurement Reduction}

We first prove the following lemma, which contains the main nontrivial step in the proof of Theorem~\ref{thm:normal_form}.

\begin{lemma}[Single Measurement Reduction]
\label{lem:single_measurement_reduction}
Let $P$ be a Pauli operator and let $\mathcal{S}$ be a stabilizer group. There exist a stabilizer group $\mathcal{S}'$ and a Clifford unitary $\mathcal{U}$, depending only on $P$ and $\mathcal{S}$, such that for every input stabilizer group $\mathcal{S}_0$,
\begin{align}
    \left(
        \mathcal{M}_P
        \circ
        \mathcal{M}_{\mathcal{S}}
    \right)[\mathcal{S}_0]
    =
    \left(
        \mathcal{M}_{\mathcal{S}'}
        \circ
        \mathcal{U}
    \right)[\mathcal{S}_0].
    \label{eq:single_measurement_reduction}
\end{align}
\end{lemma}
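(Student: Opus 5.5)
The plan is to work entirely at the level of Pauli groups modulo phases, using the update rule $\mathcal{M}_P[\mathcal{S}]=(\mathcal{S}\cap P^\perp)+\langle P\rangle$ from Appendix~\ref{app:inclusion_property}. I would split into two cases according to whether $P$ commutes with all of $\mathcal{S}$.

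\emph{Case 1: $P\in\mathcal{S}^\perp$.} Here $P$ commutes with every generator of $\mathcal{S}$. Measuring $\mathcal{S}$ and then $P$ is therefore the same as measuring the commuting group $\langle\mathcal{S},P\rangle$. I would take $\mathcal{U}=\hat{\mathbb{I}}$ and $\mathcal{S}'=\langle\mathcal{S},P\rangle$. A one-line computation with the update rule confirms that both sides equal $(\mathcal{S}_0\cap\mathcal{S}^\perp\cap P^\perp)+\mathcal{S}+\langle P\rangle$. If $P\in\mathcal{S}$ this reduces trivially to $\mathcal{S}'=\mathcal{S}$.

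\emph{Case 2: $P$ anticommutes with some element of $\mathcal{S}$.} I would first choose a generating set $\{g,g_2,\dots,g_r\}$ of $\mathcal{S}$ in which only $g$ anticommutes with $P$, by multiplying any other anticommuting generator by $g$. Then $\mathcal{S}_c\equiv\mathcal{S}\cap P^\perp=\langle g_2,\dots,g_r\rangle$. Next I would define the Clifford $\mathcal{U}=(g+P)/\sqrt2$, with phases fixed so that $g$ and $P$ are Hermitian and anticommuting. Conjugation by $\mathcal{U}$ exchanges $g\leftrightarrow P$ and fixes every Pauli commuting with both. It sends a Pauli $Q$ that commutes with $g$ and anticommutes with $P$ to $QgP$, up to phase. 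I would set $\mathcal{S}'=\mathcal{U}[\mathcal{S}]=\mathcal{S}_c+\langle P\rangle$.

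For the left-hand side, write $T'=\mathcal{S}_0\cap\mathcal{S}^\perp$, so that $\mathcal{M}_{\mathcal{S}}[\mathcal{S}_0]=T'+\mathcal{S}$. Measuring $P$ keeps the subgroup of $T'+\mathcal{S}$ commuting with $P$. The key observation is that this subgroup includes not only $(T'\cap P^\perp)+\mathcal{S}_c$ but also the products $Qg$ for every $Q\in T'$ anticommuting with $P$; the element $g$ itself and the $Q$ alone are discarded. Hence
\begin{align}
(\mathcal{M}_P\circ\mathcal{M}_{\mathcal{S}})[\mathcal{S}_0]=(T'\cap P^\perp)+\{Qg: Q\in T',\,\{Q,P\}=0\}+\mathcal{S}_c+\langle P\rangle .
\end{align}
For the right-hand side, I would use the covariance $\mathcal{M}_{\mathcal{U}[\mathcal{S}]}\circ\mathcal{U}=\mathcal{U}\circ\mathcal{M}_{\mathcal{S}}$, which holds because a unitary maps commutants to commutants. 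This gives $(\mathcal{M}_{\mathcal{S}'}\circ\mathcal{U})[\mathcal{S}_0]=\mathcal{U}[T'+\mathcal{S}]=\mathcal{U}[T']+\mathcal{S}_c+\langle P\rangle$. Every $Q\in T'$ commutes with $g$, so $\mathcal{U}$ fixes those $Q$ that also commute with $P$. It sends those that anticommute with $P$ to $QgP\equiv Qg$ modulo $\langle P\rangle$. The two groups therefore coincide.

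The main obstacle is the bookkeeping in Case 2. A naive computation of $\mathcal{M}_P\circ\mathcal{M}_{\mathcal{S}}$ that forgets that $g$ already lies in the intermediate group gives a mismatch of rank, and the products $Qg$ are exactly what repairs it. I would verify carefully that the phase choices in $\mathcal{U}$ are irrelevant modulo phases. I would also check that $\mathcal{U}$ and $\mathcal{S}'$ depend only on $P$ and $\mathcal{S}$, through the choice of $g$, and not on $\mathcal{S}_0$, as the lemma requires.
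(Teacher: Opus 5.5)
Your proposal is correct and follows essentially the same route as the paper. You use the same anticommuting generator $g$, the same Clifford $\mathcal{U}=(g+P)/\sqrt{2}$ with $\mathcal{S}'=\mathcal{U}[\mathcal{S}]$, and the same covariance step $\mathcal{U}\circ\mathcal{M}_{\mathcal{S}}=\mathcal{M}_{\mathcal{U}[\mathcal{S}]}\circ\mathcal{U}$, while your Case 1 merges the paper's cases $P\in\mathcal{S}$ and $P\in\mathcal{S}^\perp\setminus\mathcal{S}$. The one cosmetic difference is that the paper rebases the extra generators $h_j\mapsto g h_j$ of $\mathcal{M}_{\mathcal{S}}[\mathcal{S}_0]$ so that $\mathcal{U}$ fixes them, whereas you keep $Q$ and use $\mathcal{U}[Q]=QgP\equiv Qg$ modulo $\langle P\rangle$, which is the same bookkeeping.
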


\begin{proof}
Motivated by the standard Gottesman stabilizer measurement rule~\cite{gottesman1998}, we consider three possibilities.

First, $P$ may already be an element of $\mathcal{S}$. Then the measurement $\mathcal{M}_P$ is redundant after $\mathcal{M}_{\mathcal{S}}$, and Eq.~\eqref{eq:single_measurement_reduction} holds with
\begin{align}
    \mathcal{S}'=\mathcal{S},
    \qquad
    \mathcal{U}=\mathcal{I}.
\end{align}

Second, $P$ may be a nontrivial logical Pauli operator of $\mathcal{S}$, so that $P\in\mathcal{S}^{\perp}\setminus\mathcal{S}$. The two commuting measurement layers can then be combined into a single stabilizer group measurement, giving
\begin{align}
    \mathcal{S}'
    =
    \mathcal{S}+\langle P\rangle,
    \qquad
    \mathcal{U}
    =
    \mathcal{I}.
\end{align}

It remains to consider the case in which $P$ anticommutes with at least one element of $\mathcal{S}$. Let $r=\operatorname{rank}(\mathcal{S})$. We may choose generators
\begin{align}
    \mathcal{S}
    =
    \langle g_1,g_2,\ldots,g_r\rangle
\end{align}
such that $g_1$ anticommutes with $P$, while $g_i$ commutes with $P$ for every $i>1$. To construct such a generating set, choose one generator $g_1$ that anticommutes with $P$. Any other generator $g_i$ that anticommutes with $P$ may be replaced by $g_1g_i$, which commutes with $P$. These replacements leave the generated stabilizer group unchanged.

Because Pauli operators are identified up to overall phase, we choose representatives satisfying $P^2=g_1^2=\hat{\mathbb{I}}$ and define
\begin{align}
    \mathcal{U}
    =
    \frac{P+g_1}{\sqrt{2}}.
    \label{eq:single_measurement_unitary}
\end{align}
This is a Clifford unitary that exchanges $P$ and $g_1$ while leaving the remaining generators unchanged,
\begin{align}
    \mathcal{U}[P]
    &=
    g_1,
    &
    \mathcal{U}[g_1]
    &=
    P,
    &
    \mathcal{U}[g_i]
    &=
    g_i
    \quad (i>1).
    \label{eq:single_measurement_unitary_action}
\end{align}
We take
\begin{align}
    \mathcal{S}'
    =
    \mathcal{U}[\mathcal{S}]
    =
    \langle P,g_2,\ldots,g_r\rangle.
    \label{eq:single_measurement_reduced_group}
\end{align}

We now show that the same $\mathcal{U}$ and $\mathcal{S}'$ work for every input stabilizer group $\mathcal{S}_0$. Let
\begin{align}
    \mathcal{S}_1
    =
    \mathcal{M}_{\mathcal{S}}[\mathcal{S}_0],
    \qquad
    m
    =
    \operatorname{rank}(\mathcal{S}_1).
\end{align}
Measuring $\mathcal{S}$ ensures that $\mathcal{S}\subseteq\mathcal{S}_1$. We may therefore choose generators
\begin{align}
    \mathcal{S}_1
    =
    \langle
        g_1,g_2,\ldots,g_r,
        h_{r+1},\ldots,h_m
    \rangle.
\end{align}
Every $h_j$ commutes with $g_1$. If $h_j$ anticommutes with $P$, we may replace it by $g_1h_j$, which leaves $\mathcal{S}_1$ unchanged and makes it commute with $P$. The generators may therefore be chosen so that every $h_j$ commutes with both $P$ and $g_1$.

In this basis, measuring $P$ replaces $g_1$ by $P$ and leaves all other generators unchanged. Therefore,
\begin{align}
    \mathcal{M}_P[\mathcal{S}_1]
    &=
    \langle
        P,g_2,\ldots,g_r,
        h_{r+1},\ldots,h_m
    \rangle
    \nonumber\\
    &=
    \mathcal{U}[\mathcal{S}_1].
\end{align}
Since this holds for every $\mathcal{S}_0$,
\begin{align}
    \mathcal{M}_P
    \circ
    \mathcal{M}_{\mathcal{S}}
    =
    \mathcal{U}
    \circ
    \mathcal{M}_{\mathcal{S}}.
    \label{eq:measurement_replaced_by_unitary}
\end{align}
Moving the stabilizer group measurement through the unitary gives
\begin{align}
    \mathcal{U}
    \circ
    \mathcal{M}_{\mathcal{S}}
    =
    \mathcal{M}_{\mathcal{U}[\mathcal{S}]}
    \circ
    \mathcal{U}
    =
    \mathcal{M}_{\mathcal{S}'}
    \circ
    \mathcal{U}.
\end{align}
Combining these identities proves Eq.~\eqref{eq:single_measurement_reduction}.
\end{proof}

\subsection{Proof of the Normal Form}
\begin{figure}[tb]
    \centering
    \includegraphics[width=0.45\textwidth]{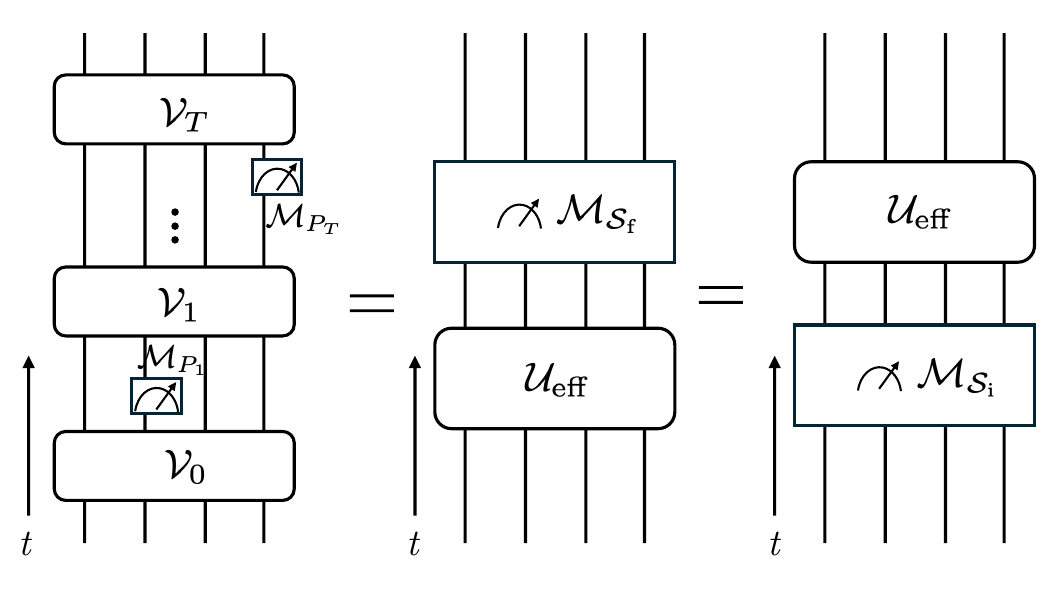}
    \caption{Normal-form representation of a monitored Clifford circuit. At the level of stabilizer groups modulo Pauli phases, the circuit can be represented by an effective Clifford unitary $\mathcal{U}_{\mathrm{eff}}$ followed by a single stabilizer-group measurement $\mathcal{M}_{\mathcal{S}_{\mathrm{f}}}$. Equivalently, the measurement can be applied before the effective Clifford unitary, with the initial and final measurement groups related by $\mathcal{S}_{\mathrm{f}}=\mathcal{U}_{\mathrm{eff}}[\mathcal{S}_{\mathrm{i}}]$. The final measurement group $\mathcal{S}_{\mathrm{f}}$ is the stabilizer group obtained by applying the circuit to the maximally mixed state.}
    \label{fig:normal_form}
\end{figure}

Consider a monitored Clifford circuit containing $T$ Pauli measurements,
\begin{align}
    \mathcal{C}
    =
    \mathcal{V}_T
    \circ
    \mathcal{M}_{P_T}
    \circ\cdots\circ
    \mathcal{V}_1
    \circ
    \mathcal{M}_{P_1}
    \circ
    \mathcal{V}_0,
    \label{eq:generic_monitored_clifford_circuit}
\end{align}
where each $\mathcal{V}_t$ is a Clifford unitary and each $P_t$ is a Pauli string. All Pauli operators and stabilizer groups are understood modulo overall phases, and circuit equalities refer to their induced maps on arbitrary input stabilizer groups.

A Clifford unitary can be moved through a Pauli measurement according to
\begin{align}
    \mathcal{U}
    \circ
    \mathcal{M}_P
    =
    \mathcal{M}_{\mathcal{U}[P]}
    \circ
    \mathcal{U}.
    \label{eq:move_measurement_through_unitary}
\end{align}
For a stabilizer group measurement, the corresponding identity is
\begin{align}
    \mathcal{U}
    \circ
    \mathcal{M}_{\mathcal{S}}
    =
    \mathcal{M}_{\mathcal{U}[\mathcal{S}]}
    \circ
    \mathcal{U}.
    \label{eq:move_stabilizer_measurement_through_unitary}
\end{align}

Using Eq.~\eqref{eq:move_measurement_through_unitary}, all Clifford unitaries in Eq.~\eqref{eq:generic_monitored_clifford_circuit} can be moved to the right. Denoting the resulting measured Pauli operators by $\widetilde P_1,\ldots,\widetilde P_T$, we obtain
\begin{align}
    \mathcal{C}
    =
    \mathcal{M}_{\widetilde P_T}
    \circ\cdots\circ
    \mathcal{M}_{\widetilde P_2}
    \circ
    \mathcal{M}_{\widetilde P_1}
    \circ
    \widetilde{\mathcal{U}}_1
\end{align}
for some Clifford unitary $\widetilde{\mathcal{U}}_1$.

We now absorb the measurement layers one at a time, starting from the right. Let
\begin{align}
    \mathcal{S}_1
    =
    \langle\widetilde P_1\rangle.
\end{align}
Applying Lemma~\ref{lem:single_measurement_reduction} to the next measurement gives a stabilizer group $\mathcal{S}_2$ and a Clifford unitary $\widetilde{\mathcal{U}}_2$ such that
\begin{align}
    \mathcal{M}_{\widetilde P_2}
    \circ
    \mathcal{M}_{\mathcal{S}_1}
    =
    \mathcal{M}_{\mathcal{S}_2}
    \circ
    \widetilde{\mathcal{U}}_2.
\end{align}
The circuit therefore becomes
\begin{align}
    \mathcal{C}
    =
    \mathcal{M}_{\widetilde P_T}
    \circ\cdots\circ
    \mathcal{M}_{\widetilde P_3}
    \circ
    \mathcal{M}_{\mathcal{S}_2}
    \circ
    \widetilde{\mathcal{U}}_2
    \circ
    \widetilde{\mathcal{U}}_1.
\end{align}

Repeating this step for each remaining measurement gives
\begin{align}
    \mathcal{M}_{\widetilde P_j}
    \circ
    \mathcal{M}_{\mathcal{S}_{j-1}}
    =
    \mathcal{M}_{\mathcal{S}_j}
    \circ
    \widetilde{\mathcal{U}}_j.
\end{align}
After the final measurement has been absorbed,
\begin{align}
    \mathcal{C}
    =
    \mathcal{M}_{\mathcal{S}_T}
    \circ
    \widetilde{\mathcal{U}}_T
    \circ\cdots\circ
    \widetilde{\mathcal{U}}_2
    \circ
    \widetilde{\mathcal{U}}_1.
\end{align}
With
\begin{align}
    \mathcal{S}_{\mathrm{f}}
    &\equiv
    \mathcal{S}_T,
    &
    \mathcal{U}_{\mathrm{eff}}
    &\equiv
    \widetilde{\mathcal{U}}_T
    \circ\cdots\circ
    \widetilde{\mathcal{U}}_1,
\end{align}
Eq.~\eqref{eq:unitary_measurement_normal_form} follows. Equation~\eqref{eq:measurement_unitary_normal_form} follows immediately from Eq.~\eqref{eq:move_stabilizer_measurement_through_unitary} by taking
\begin{align}
    \mathcal{S}_{\mathrm{i}}
    \equiv
    \mathcal{U}_{\mathrm{eff}}^{-1}
    [\mathcal{S}_{\mathrm{f}}].
\end{align}
This proves Theorem~\ref{thm:normal_form}.

For later use, the final measurement group is fixed by the action of $\mathcal{C}$ on the maximally mixed state. Since the trivial stabilizer group is invariant under any unitary,
\begin{align}
    \mathcal{C}[\langle\hat{\mathbb{I}}\rangle]
    &=
    \left(
        \mathcal{M}_{\mathcal{S}_{\mathrm{f}}}
        \circ
        \mathcal{U}_{\mathrm{eff}}
    \right)
    [\langle\hat{\mathbb{I}}\rangle]
    \nonumber\\
    &=
    \mathcal{S}_{\mathrm{f}}.
    \label{eq:final_measurement_group}
\end{align}

\section{Construction of the Effective Floquet Dynamics}
\label{app:logical_map}

We construct a fixed Clifford unitary $\mathcal{U}_F$ that preserves $\mathrm{ISG}_{\infty}$ and reproduces one Floquet period on every stabilizer group containing it. Its induced action on the logical Pauli space provides the map used for the logical diagnostics below.

For each $t$, let $\mathcal{U}_t$ denote an effective Clifford unitary in the normal form of the $t$ period circuit $\mathcal{C}_F^t$. Equation~\eqref{eq:final_measurement_group} gives
\begin{align}
    \mathcal{C}_F^t
    =
    \mathcal{M}_{\mathrm{ISG}_t}
    \circ
    \mathcal{U}_t,
    \qquad
    \mathrm{ISG}_t
    =
    \mathcal{C}_F^t[\langle\hat{\mathbb{I}}\rangle].
    \label{eq:t_period_normal_form}
\end{align}
For $t\geq t^*$,
\begin{align}
    \mathcal{C}_F^t
    =
    \mathcal{M}_{\mathrm{ISG}_\infty}
    \circ
    \mathcal{U}_t.
    \label{eq:post_plateau_normal_form}
\end{align}

\subsection{Construction of $\mathcal{U}_F$}

We first record a simple property of stabilizer group measurements.

\begin{lemma}
\label{lem:common_stabilizer_subgroup}
Let $\mathcal{A}$ and $\mathcal{R}$ be stabilizer groups, and let
\begin{align}
    \mathcal{B}
    =
    \mathcal{M}_{\mathcal{R}}[\mathcal{A}].
\end{align}
Then
\begin{align}
    \mathcal{A}\cap\mathcal{B}
    =
    \mathcal{A}\cap\mathcal{R}^{\perp}.
    \label{eq:common_stabilizer_subgroup}
\end{align}
\end{lemma}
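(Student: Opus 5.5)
We prove the two inclusions of Eq.~\eqref{eq:common_stabilizer_subgroup} separately. The set-up is the stabilizer-group measurement $\mathcal{M}_{\mathcal{R}}$, obtained by measuring an independent generating set $R_1,\ldots,R_s$ of $\mathcal{R}$ one at a time. Each step uses the single-Pauli update of Appendix~\ref{app:inclusion_property}, $\mathcal{M}_P[\mathcal{S}]=(\mathcal{S}\cap P^\perp)+\langle P\rangle$.

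We first establish a closed form: for any stabilizer group $\mathcal{A}$,
\begin{align}
    \mathcal{M}_{\mathcal{R}}[\mathcal{A}]=(\mathcal{A}\cap\mathcal{R}^\perp)+\mathcal{R}.
\end{align}
We prove this by induction on the number of generators measured. After measuring $R_1,\ldots,R_j$, the claim is that the group equals $(\mathcal{A}\cap\langle R_1,\ldots,R_j\rangle^\perp)+\langle R_1,\ldots,R_j\rangle$. For the inductive step, measure $R_{j+1}$ and intersect with $R_{j+1}^\perp$.
\begin{itemize}
\item Elements of $\langle R_1,\ldots,R_j\rangle$ commute with $R_{j+1}$, because the generators of $\mathcal{R}$ commute, so they survive the intersection.
\item An element $a r$, with $a\in\mathcal{A}\cap\langle R_1,\ldots,R_j\rangle^\perp$ and $r\in\langle R_1,\ldots,R_j\rangle$, commutes with $R_{j+1}$ if and only if $a$ does.
\end{itemize}
The intersection is therefore $(\mathcal{A}\cap\langle R_1,\ldots,R_{j+1}\rangle^\perp)+\langle R_1,\ldots,R_j\rangle$. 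Adding $\langle R_{j+1}\rangle$ gives the claimed form. This inductive step is the only place requiring care. The key points are that the intersection with $R_{j+1}^\perp$ distributes over the product because of the commutation argument above, and that the outcome does not depend on the order of the generators.

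For the inclusion $\supseteq$, take $a\in\mathcal{A}\cap\mathcal{R}^\perp$. The closed form shows $a\in\mathcal{B}$ directly, and $a\in\mathcal{A}$ trivially.

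For the inclusion $\subseteq$, take $x\in\mathcal{A}\cap\mathcal{B}$. Since $x\in\mathcal{B}$, we can write $x=ar$ with $a\in\mathcal{A}\cap\mathcal{R}^\perp$ and $r\in\mathcal{R}$. Both $a$ and $r$ lie in $\mathcal{R}^\perp$: for $a$ by construction, and for $r$ because $\mathcal{R}$ is abelian, so $\mathcal{R}\subseteq\mathcal{R}^\perp$. Hence $x\in\mathcal{R}^\perp$. Together with $x\in\mathcal{A}$, this gives $x\in\mathcal{A}\cap\mathcal{R}^\perp$, which completes the proof.

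A simpler route to $\subseteq$ avoids the closed form. The output of any single-Pauli measurement $\mathcal{M}_P$ always lies in $P^\perp$, because $\mathcal{S}\cap P^\perp\subseteq P^\perp$ and $P\in P^\perp$. The full measurement $\mathcal{M}_{\mathcal{R}}$ consists of measuring the commuting generators $R_1,\ldots,R_s$ in sequence, and the measurement of $R_i$ can be placed last. It follows that $\mathcal{B}\subseteq R_i^\perp$ for every $i$, so $\mathcal{B}\subseteq\mathcal{R}^\perp$. Intersecting both sides with $\mathcal{A}$ yields $\mathcal{A}\cap\mathcal{B}\subseteq\mathcal{A}\cap\mathcal{R}^\perp$.
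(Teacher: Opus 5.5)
Your proof is correct and follows the paper's argument. The paper reads off $\supseteq$ from the update formula $\mathcal{B}=(\mathcal{A}\cap\mathcal{R}^{\perp})+\mathcal{R}$, and gets $\subseteq$ from $\mathcal{R}\subseteq\mathcal{B}$ together with $\mathcal{B}$ being abelian, which is equivalent to your decomposition $x=ar$. The only difference is that you derive the multi-generator update formula by induction from the single-Pauli rule, whereas the paper simply asserts it.
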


\begin{proof}
The measurement update gives
\begin{align}
    \mathcal{B}
    =
    \left(
        \mathcal{A}\cap\mathcal{R}^{\perp}
    \right)
    +
    \mathcal{R},
\end{align}
which implies
$\mathcal{A}\cap\mathcal{R}^{\perp}
\subseteq
\mathcal{A}\cap\mathcal{B}$.
Conversely, if $P\in\mathcal{A}\cap\mathcal{B}$, then $P$ commutes with $\mathcal{R}\subseteq\mathcal{B}$ because $\mathcal{B}$ is a stabilizer group. Hence
$P\in\mathcal{A}\cap\mathcal{R}^{\perp}$.
\end{proof}

The one period normal form is
\begin{align}
    \mathcal{C}_F
    =
    \mathcal{M}_{\mathrm{ISG}_1}
    \circ
    \mathcal{U}_1.
    \label{eq:one_period_normal_form}
\end{align}
For $t\geq t^*$,
\begin{align}
    \mathcal{C}_F^{t+1}
    &=
    \mathcal{C}_F
    \circ
    \mathcal{C}_F^t
    \nonumber\\
    &=
    \left(
        \mathcal{M}_{\mathrm{ISG}_1}
        \circ
        \mathcal{U}_1
    \right)
    \circ
    \left(
        \mathcal{M}_{\mathrm{ISG}_\infty}
        \circ
        \mathcal{U}_t
    \right)
    \nonumber\\
    &=
    \mathcal{M}_{\mathrm{ISG}_1}
    \circ
    \mathcal{M}_{\mathcal{U}_1[\mathrm{ISG}_\infty]}
    \circ
    \mathcal{U}_1
    \circ
    \mathcal{U}_t.
    \label{eq:adjacent_measurement_layers}
\end{align}

We now reduce the two adjacent measurement layers. Since $\mathrm{ISG}_\infty$ is fixed by one Floquet period,
\begin{align}
    \mathcal{M}_{\mathrm{ISG}_1}
    \left[
        \mathcal{U}_1[\mathrm{ISG}_\infty]
    \right]
    =
    \mathcal{C}_F[\mathrm{ISG}_\infty]
    =
    \mathrm{ISG}_\infty.
    \label{eq:plateau_measurement_update}
\end{align}
Because $\mathcal{U}_1$ is unitary, the two groups have the same rank, so the measurement is rank preserving. The groups
$\mathcal{U}_1[\mathrm{ISG}_\infty]$
and
$\mathrm{ISG}_\infty$
therefore form a reversible pair in the sense of Ref.~\cite{aasen2023}.

Indeed, any element of $\mathcal{U}_1[\mathrm{ISG}_\infty]$ that commutes with $\mathrm{ISG}_\infty$ also commutes with $\mathrm{ISG}_1$ and is retained by the measurement in Eq.~\eqref{eq:plateau_measurement_update}; since the two groups have the same rank, the converse condition follows as well.

Let
\begin{align}
    \mathcal{H}
    \equiv
    \mathcal{U}_1[\mathrm{ISG}_\infty]
    \cap
    \mathrm{ISG}_\infty.
\end{align}
Applying Lemma~\ref{lem:common_stabilizer_subgroup} to Eq.~\eqref{eq:plateau_measurement_update} with
\begin{align}
    \mathcal{A}
    &=
    \mathcal{U}_1[\mathrm{ISG}_\infty],
    &
    \mathcal{B}
    &=
    \mathrm{ISG}_\infty,
    &
    \mathcal{R}
    &=
    \mathrm{ISG}_1,
\end{align}
gives
\begin{align}
    \mathcal{H}
    =
    \mathcal{U}_1[\mathrm{ISG}_\infty]
    \cap
    \mathrm{ISG}_1^\perp.
\end{align}
The measurement update therefore becomes
\begin{align}
    \mathrm{ISG}_\infty
    =
    \mathcal{H}
    +
    \mathrm{ISG}_1.
    \label{eq:plateau_group_decomposition}
\end{align}

As shown in Ref.~\cite{aasen2023}, the common subgroup $\mathcal{H}$ can be completed to generating sets of the reversible pair such that
\begin{align}
    \mathrm{ISG}_\infty
    &=
    \langle g_1,\ldots,g_\alpha\rangle
    +
    \mathcal{H},
    \nonumber\\
    \mathcal{U}_1[\mathrm{ISG}_\infty]
    &=
    \langle g'_1,\ldots,g'_\alpha\rangle
    +
    \mathcal{H},
    \label{eq:plateau_conjugate_generators}
\end{align}
where $g_i$ anticommutes with $g'_i$ and commutes with $g'_j$ for $i\neq j$.

Since Eq.~\eqref{eq:plateau_group_decomposition} states that $\mathrm{ISG}_1$, together with $\mathcal{H}$, generates all of $\mathrm{ISG}_\infty$, the complementary generators $g_1,\ldots,g_\alpha$ in Eq.~\eqref{eq:plateau_conjugate_generators} can be chosen from $\mathrm{ISG}_1$. Any remaining generators of $\mathrm{ISG}_1$ can then be chosen in $\mathcal{H}$, so
\begin{align}
    \mathrm{ISG}_1
    =
    \langle g_1,\ldots,g_\alpha\rangle
    +
    \left(
        \mathrm{ISG}_1\cap\mathcal{H}
    \right).
    \label{eq:ISG1_decomposition}
\end{align}

The output of
$\mathcal{M}_{\mathcal{U}_1[\mathrm{ISG}_\infty]}$
already contains
$\mathrm{ISG}_1\cap\mathcal{H}$,
so these measurements are redundant in the subsequent $\mathrm{ISG}_1$ layer. Applying the anticommuting case of Lemma~\ref{lem:single_measurement_reduction} sequentially to $g_1,\ldots,g_\alpha$ gives
\begin{align}
    \mathcal{M}_{\mathrm{ISG}_1}
    \circ
    \mathcal{M}_{\mathcal{U}_1[\mathrm{ISG}_\infty]}
    =
    \mathcal{M}_{\mathrm{ISG}_\infty}
    \circ
    \mathcal{U}',
    \label{eq:adjacent_measurement_reduction}
\end{align}
where
\begin{align}
    \mathcal{U}'
    \equiv
    \prod_{i=1}^{\alpha}
    \frac{g_i+g'_i}{\sqrt{2}}.
    \label{eq:reversible_measurement_unitary}
\end{align}
The factors commute because distinct conjugate pairs commute. By construction, each factor exchanges $g'_i$ and $g_i$ while leaving $\mathcal{H}$ unchanged. Therefore,
\begin{align}
    \mathcal{U}'
    \left[
        \mathcal{U}_1[\mathrm{ISG}_\infty]
    \right]
    =
    \mathrm{ISG}_\infty.
    \label{eq:Uprime_maps_plateau}
\end{align}

Substituting Eq.~\eqref{eq:adjacent_measurement_reduction} into Eq.~\eqref{eq:adjacent_measurement_layers} and defining
\begin{align}
    \mathcal{U}_F
    \equiv
    \mathcal{U}'
    \circ
    \mathcal{U}_1
    \label{eq:effective_floquet_unitary}
\end{align}
gives
\begin{align}
    \mathcal{C}_F^{t+1}
    =
    \mathcal{M}_{\mathrm{ISG}_\infty}
    \circ
    \mathcal{U}_F
    \circ
    \mathcal{U}_t.
\end{align}
Thus, we may choose
\begin{align}
    \mathcal{U}_{t+1}
    =
    \mathcal{U}_F
    \circ
    \mathcal{U}_t,
    \qquad
    t\geq t^*.
\end{align}
Moreover, Eq.~\eqref{eq:Uprime_maps_plateau} gives
\begin{align}
    \mathcal{U}_F[\mathrm{ISG}_\infty]
    &=
    \mathcal{U}'
    \left[
        \mathcal{U}_1[\mathrm{ISG}_\infty]
    \right]
    \nonumber\\
    &=
    \mathrm{ISG}_\infty,
\end{align}
showing that $\mathcal{U}_F$ preserves the plateau stabilizer group. Since the adjacent measurement layers are fixed, $\mathcal{U}'$ and $\mathcal{U}_F$ can be chosen independently of $t$.

\subsection{Action on Stabilizer Groups Containing the Plateau Group}

Let $\mathcal{S}_0$ be any stabilizer group satisfying
\begin{align}
    \mathrm{ISG}_\infty
    \subseteq
    \mathcal{S}_0.
\end{align}
Then
\begin{align}
    \mathcal{U}_1[\mathrm{ISG}_\infty]
    \subseteq
    \mathcal{U}_1[\mathcal{S}_0].
\end{align}
Measuring stabilizers already contained in a stabilizer group leaves that group unchanged, so
\begin{align}
    \mathcal{M}_{\mathcal{U}_1[\mathrm{ISG}_\infty]}
    \left[
        \mathcal{U}_1[\mathcal{S}_0]
    \right]
    =
    \mathcal{U}_1[\mathcal{S}_0].
\end{align}
We may therefore insert this measurement layer into the one period evolution and use Eq.~\eqref{eq:adjacent_measurement_reduction}:
\begin{align}
    \mathcal{C}_F[\mathcal{S}_0]
    &=
    \left(
        \mathcal{M}_{\mathrm{ISG}_1}
        \circ
        \mathcal{U}_1
    \right)[\mathcal{S}_0]
    \nonumber\\
    &=
    \left(
        \mathcal{M}_{\mathrm{ISG}_1}
        \circ
        \mathcal{M}_{\mathcal{U}_1[\mathrm{ISG}_\infty]}
        \circ
        \mathcal{U}_1
    \right)[\mathcal{S}_0]
    \nonumber\\
    &=
    \left(
        \mathcal{M}_{\mathrm{ISG}_\infty}
        \circ
        \mathcal{U}'
        \circ
        \mathcal{U}_1
    \right)[\mathcal{S}_0]
    \nonumber\\
    &=
    \left(
        \mathcal{M}_{\mathrm{ISG}_\infty}
        \circ
        \mathcal{U}_F
    \right)[\mathcal{S}_0].
\end{align}
Since $\mathcal{U}_F$ preserves the plateau stabilizer group,
\begin{align}
    \mathrm{ISG}_\infty
    =
    \mathcal{U}_F[\mathrm{ISG}_\infty]
    \subseteq
    \mathcal{U}_F[\mathcal{S}_0].
\end{align}
The final measurement
$\mathcal{M}_{\mathrm{ISG}_\infty}$
therefore leaves
$\mathcal{U}_F[\mathcal{S}_0]$
unchanged, giving
\begin{align}
    \mathcal{C}_F[\mathcal{S}_0]
    =
    \mathcal{U}_F[\mathcal{S}_0].
\end{align}
This establishes Eq.~\eqref{eq:CF_equals_UF}.

\section{Numerical Extraction of the Invertible Logical Map}
\label{app:numerical_map}

Once the stabilizer group reaches $\mathrm{ISG}_\infty$ during purification, one Floquet period induces an invertible map on the surviving logical Pauli operators. We numerically extract this logical map as follows. 
Let $\mathrm{ISG}_\infty$ have rank $N-k$, and choose an ordered logical basis
\begin{align}
    \bm{\mathcal L}
    =
    (\overline X_1,\ldots,\overline X_k,
    \overline Z_1,\ldots,\overline Z_k).
\end{align}
To tag these operators, we purify the plateau mixed state using ancillas $a_i$ so that the enlarged stabilizer group contains
\begin{align}
    \overline X_i X_{a_i},
    \qquad
    \overline Z_i Z_{a_i},
    \qquad
    i=1,\ldots,k.
\end{align}
We extend the Floquet circuit trivially to the ancillas, which remain unchanged throughout the evolution. Since the purified stabilizer group contains $\mathrm{ISG}_\infty$ on the physical qubits, Eq.~\eqref{eq:CF_equals_UF} applies to the enlarged system. One additional Floquet period acts only on the physical qubits, producing tagged operators
\begin{align}
    \overline X_i' X_{a_i},
    \qquad
    \overline Z_i' Z_{a_i}.
\end{align}
Gaussian elimination on the ancilla components isolates the $2k$ independent tags. Applying the same row operations to the physical components yields the evolved logical basis.

The $\alpha$th column of $M_F$ is the logical vector of the evolved basis operator $\mathcal L_\alpha'$. For $j=1,\ldots,k$, its components are obtained from the symplectic commutators
\begin{align}
    (M_F)_{j,\alpha}
    &=
    \omega(\mathcal L_\alpha',\overline Z_j),
    \\
    (M_F)_{k+j,\alpha}
    &=
    \omega(\mathcal L_\alpha',\overline X_j),
\end{align}
where $\omega(P,Q)=1$ for anticommuting Paulis and $0$ otherwise. With this convention, a logical Pauli operator represented by a column vector $\bm v$ evolves as
\begin{align}
    \bm v
    \longmapsto
    M_F\bm v.
\end{align}
Since Clifford evolution preserves Pauli commutation relations, $M_F$ is symplectic.

Extracting $M_F$ reduces the subsequent dynamics from the full $N$-physical-qubit stabilizer circuit to an invertible map acting on the $k$ surviving logical qubits, allowing us to access substantially larger system sizes. We use this logical representation to compute both the support radius and the Krylov dimension.

\section{Numerical Calculation of the Logical-Support Radius}
\label{app:logical_support_radius}

Appendix~\ref{app:numerical_map} constructs the effective Floquet map $M_F$ as an invertible $2k\times 2k$ matrix acting on the logical Pauli space. To calculate the logical-support radius, we find the smallest ball $B_R(r)$ that supports at least one physical Pauli representative of the evolved logical operator.

We choose physical Pauli representatives for the logical basis operators and a set of generators for the plateau stabilizer group. Starting from an SSLO $\hat P_r$, we evolve its binary vector in the logical basis using $M_F$. The corresponding product of the basis representatives gives a physical Pauli representative of the evolved operator. Multiplication by plateau stabilizers preserves its logical equivalence class.

We write the plateau stabilizer generators as rows of a binary matrix and order the physical qubits by decreasing distance from $r$. Each qubit contributes two consecutive columns for its $X$ and $Z$ components. The first columns therefore correspond to the most distant qubits, while the last two columns correspond to the qubit at $r$. The physical representative of the logical operator uses the same column ordering.

We perform Gaussian elimination on the stabilizer rows from left to right. We then use the resulting generators to reduce the logical representative in the same order, removing distant support whenever possible without restoring components already eliminated. The distance of the farthest remaining nonidentity Pauli factor gives the logical-support radius. No equivalent representative can be supported in a smaller ball.

For the distributions at late times in Figs.~\ref{fig:logical_support}(c) and \ref{fig:logical_support}(d), we perform this calculation after each complete Floquet period. For the trajectory in Fig.~\ref{fig:logical_support}(b), we also evaluate the radius after each of the four unitary layers within a period. At each such time, we use the physical representative and plateau stabilizer group at the corresponding stage of the circuit. This gives a time resolution of one quarter period.

\section{Additional Numerical Results}
\label{app:additional_numerics}

This appendix presents additional numerical results for the two-dimensional Floquet circuit studied in Sec.~\ref{sec:two_dimensional_transition}. We first examine the plateau entropy density obtained from purification dynamics, and then present additional results for the bipartite entanglement and antipodal mutual information.

\subsection{Purification Entropy Density}
\label{app:2d_purification}

We initialize the $L\times L$ torus in the maximally mixed state and evolve until the full-system entropy reaches its late-time plateau. Fig.~\ref{fig:2d_purification} shows the sample-averaged plateau entropy density. The curves for different system sizes nearly coincide and vary smoothly with $p$, with no visible feature near the entanglement transition at $p\simeq0.24$. The plateau entropy therefore remains extensive on both sides of the entanglement transition, consistent with the storage-region argument of Sec.~\ref{sec:one_dimension:purification}. 

\begin{figure}[t] 
    \centering \includegraphics[width=0.5\textwidth]{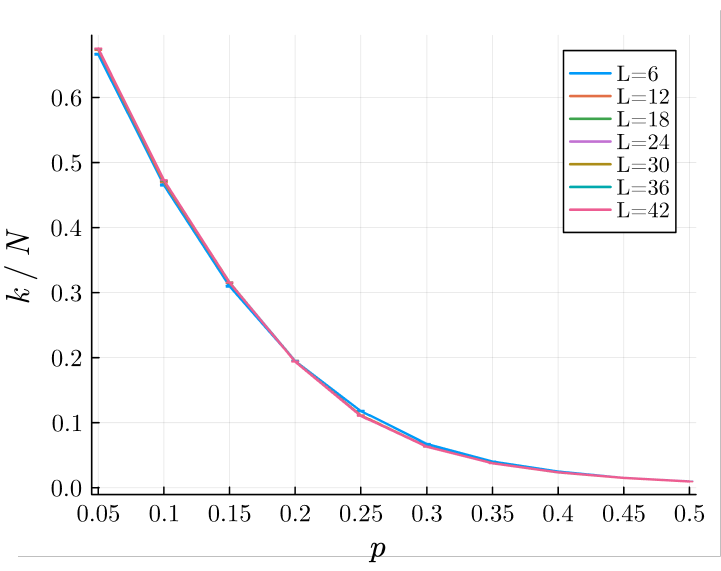} 
    \caption{Plateau entropy density in two dimensions. Sample-averaged late-time full-system entropy density for $L\times L$ systems with periodic boundary conditions initialized in the maximally mixed state. The curves show negligible system-size dependence and vary smoothly with $p$, including across the entanglement transition near $p\simeq0.24$, with no indication of a corresponding purification transition.}
\label{fig:2d_purification} 
\end{figure}

\subsection{Bipartite Entanglement Distributions}

As discussed in Sec.~\ref{sec:two_dimensional_transition}, the sample-averaged bipartite entropy clearly distinguishes the volume-law and area-law phases deep within each regime. Near the transition, however, it contains both long-range and short-range contributions that are difficult to separate.

To illustrate this issue, Fig.~\ref{fig:entanglement_distribution} compares two geometries with the same subsystem volume but different cut lengths. The first has $(L_x,L_y)=(32,32)$, while the second has $(L_x,L_y)=(64,16)$. In both cases, the subsystem on one side of the middle cut contains $16\times32$ qubits, but the length of the entanglement cut differs by a factor of two.

At low measurement rates, the distribution develops a peak at large entanglement values that reflects the long-range contribution present in the volume-law phase. At high measurement rates, most of the weight lies at smaller entanglement values associated with short-range entanglement near the cut. Near the transition, these contributions overlap strongly. This makes it difficult to extract the critical point from the bipartite entropy alone.

\begin{figure*}[t]
    \centering
    \includegraphics[width=0.85\textwidth]{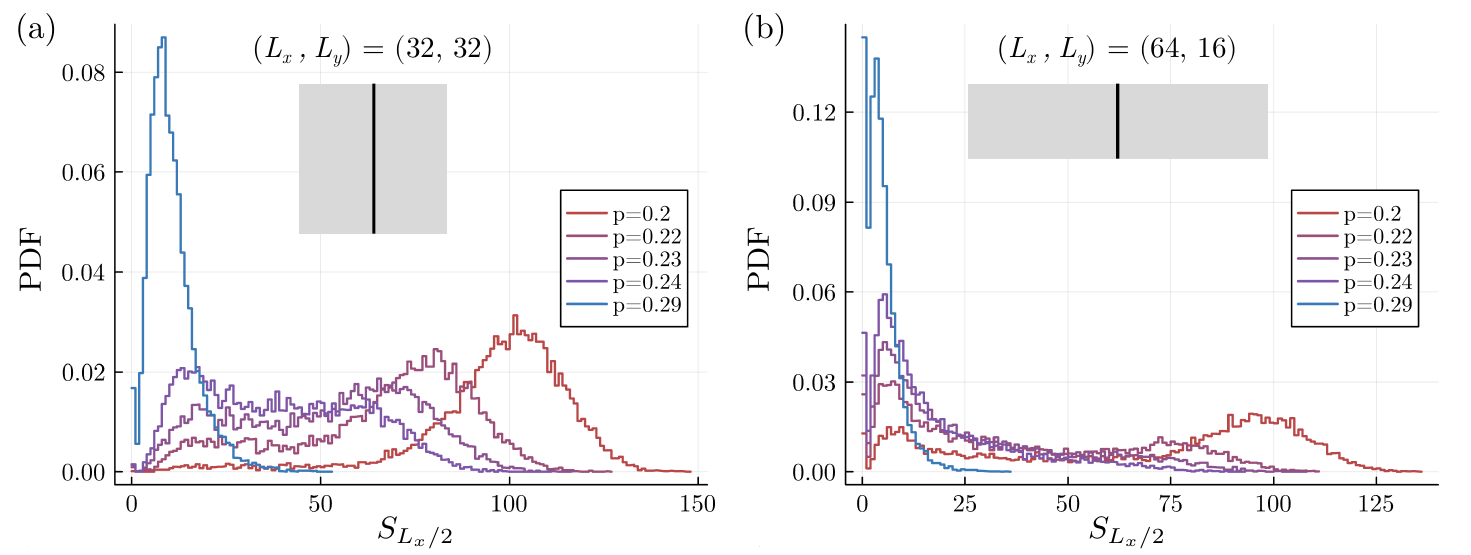}
    \caption{
    Distribution of the saturated bipartite entropy $S_{L_x/2}$ over circuit realizations for two geometries. The system sizes are (a) $(L_x,L_y)=(32,32)$ and (b) $(L_x,L_y)=(64,16)$. In both cases, the subsystem on one side of the middle cut contains $16\times32$ qubits, while the cut lengths are different. At low measurement rates, a peak at large entanglement reflects the long-range contribution in the volume-law phase. At high measurement rates, most of the weight is concentrated at smaller values associated with short-range entanglement. Near the transition, the two contributions overlap.
    }
    \label{fig:entanglement_distribution}
\end{figure*}

\subsection{Entanglement Growth in Individual Realizations}
\label{app:entanglement_growth}

Fig.~\ref{fig:entanglement_growth} shows the time evolution of the bipartite entanglement entropy $S_{L_x/2}(t)$ for representative circuit realizations at $L=64$ and $p=0.21$, $0.24$, and $0.29$. The entanglement grows toward its saturation value with only small $O(1)$ fluctuations. We do not observe any substantial decrease during the growth, apart from these small fluctuations. We evolve each realization for a sufficiently long time that the entanglement is clearly saturated. We then estimate the saturated entropy by averaging over the final time window. The saturation time $t_{\mathrm{sat}}$ is defined as the first time at which the entanglement reaches within one unit below this estimate. This choice makes $t_{\mathrm{sat}}$ less sensitive to small fluctuations near saturation. The median of $t_{\mathrm{sat}}$ over circuit realizations is used to compute the effective dynamical exponent $z_{\mathrm{eff}}$ shown in the inset of Fig.~\ref{fig:2d_bipartite}.

\begin{figure*}[t]
    \centering
    \includegraphics[width=0.9\textwidth]{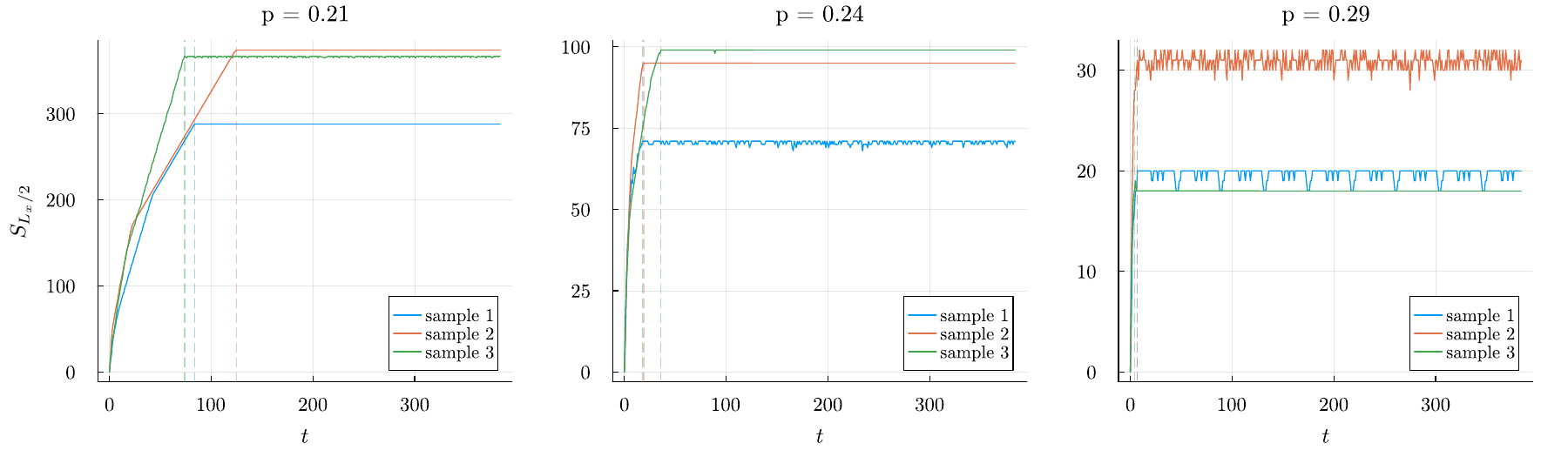}
    \caption{Entanglement growth in individual circuit realizations at $L=64$. Three representative realizations are shown for each of $p=0.21$, $0.24$, and $0.29$. Vertical dashed lines indicate the corresponding saturation times $t_{\mathrm{sat}}$.}
    \label{fig:entanglement_growth}
\end{figure*}

\subsection{Sample-to-Sample Fluctuations of the AMI}

The AMI exhibits strong sample-to-sample fluctuations near the transition. Fig.~\ref{fig:AMI_zero} shows two measures of the weight at small AMI. The left panel shows the fraction of circuit realizations whose AMI remains zero throughout the late-time averaging window. The right panel shows the fraction whose time-averaged saturated AMI is smaller than one. Although the instantaneous AMI is integer valued, its late-time average need not be. For example, an AMI that oscillates between zero and one has a time average between these values.

\begin{figure}[t]
    \centering
    \includegraphics[width=0.4\textwidth]{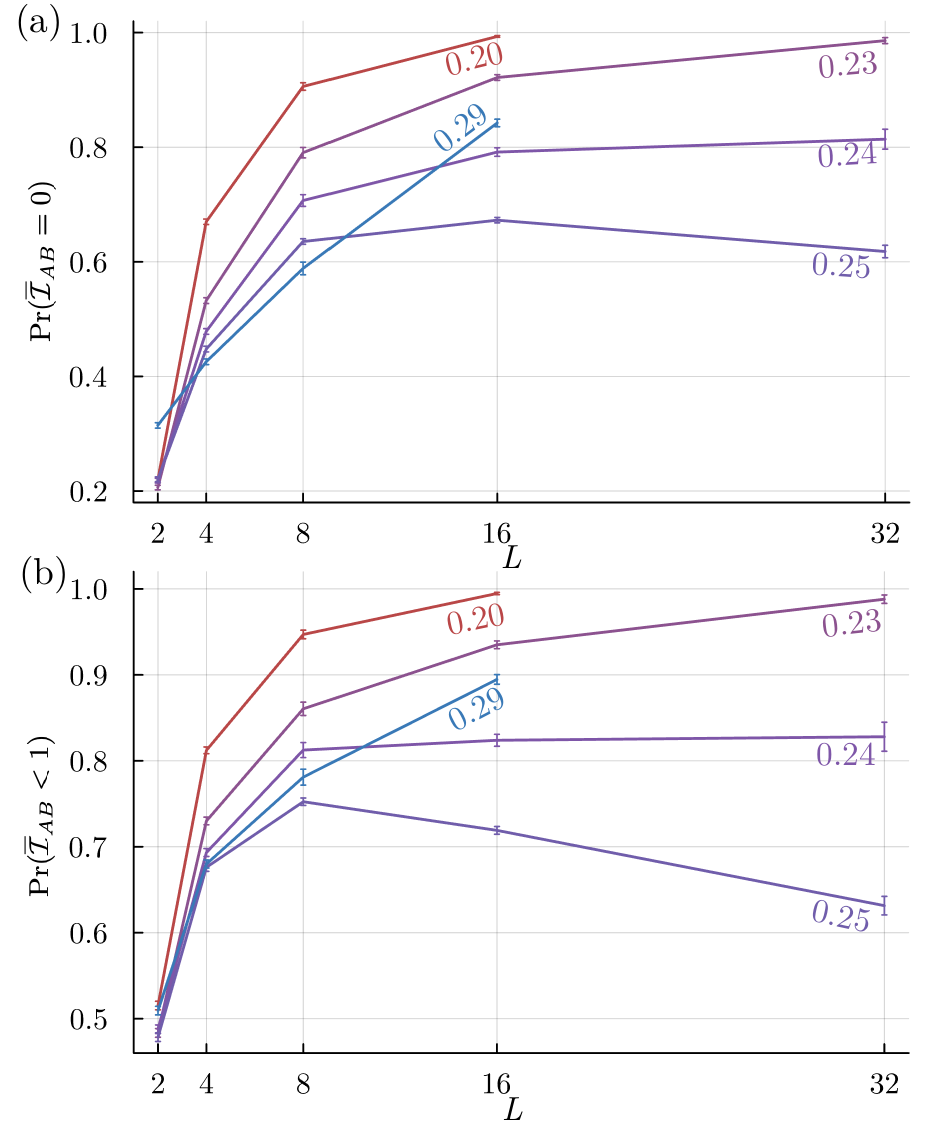}
    \caption{(a) Fraction of circuit realizations with $\overline{\mathcal I}_{AB}=0$. (b) Fraction with $\overline{\mathcal I}_{AB}<1$, where $\overline{\mathcal I}_{AB}$ denotes the average over the late-time window. Non-integer values of $\overline{\mathcal I}_{AB}$ can arise from temporal fluctuations between integer-valued AMI. Error bars show standard errors.}
    \label{fig:AMI_zero}
\end{figure}

The full distribution is shown through its complementary cumulative distribution function in Fig.~\ref{fig:AMI_cdf}. Near the putative critical point, the distribution changes strongly with system size and retains substantial weight at both zero and large AMI. These broad sample-to-sample fluctuations limit the precision with which the critical point can be determined from the sample average alone.

\begin{figure}[t]
    \centering
    \includegraphics[width=0.5\textwidth]{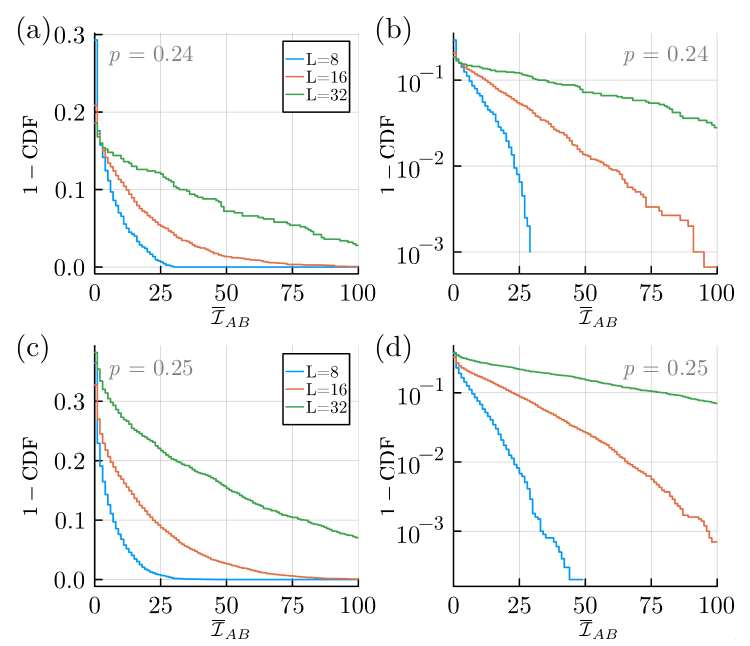}
    \caption{Complementary Cumulative distribution function $1-\mathrm{CDF}$ of the late-time-averaged AMI for several system sizes, near the critical value of $p$. The top row shows $p=0.24$ and the bottom row shows $p=0.25$. The left panels use a linear scale and the right panels use a logarithmic scale for $1-\mathrm{CDF}$.
    }
    \label{fig:AMI_cdf}
\end{figure}

\subsection{AMI in Spacetime-Random Circuits}
\label{app:spacetime_random_AMI}

For comparison, we compute the same antipodal mutual information in the spacetime-random $(2+1)$-dimensional Clifford circuit of Ref.~\cite{lunt2021}. We use the same four nearest-neighbor gate directions, $\{\hat x,\hat y,-\hat x,-\hat y\}$, as in our Floquet model, and every site is independently measured in the $\hat Z$ basis with probability $p$ after each unitary layer. The difference is that this circuit is not Floquet: the Clifford gates and measurement locations are resampled as the dynamics proceeds rather than fixed and repeated in time. We use the same AMI geometry and late-time averaging procedure as in Fig.~\ref{fig:AMI}.

Fig.~\ref{fig:AMI_random} shows the resulting late-time AMI. A peak develops near the transition, but its height remains below one and shows no strong growth over the accessible system sizes. This $O(1)$ critical AMI contrasts with the much stronger size dependence found in the Floquet circuit.

\begin{figure}[t]
    \centering
    \includegraphics[width=0.5\textwidth]{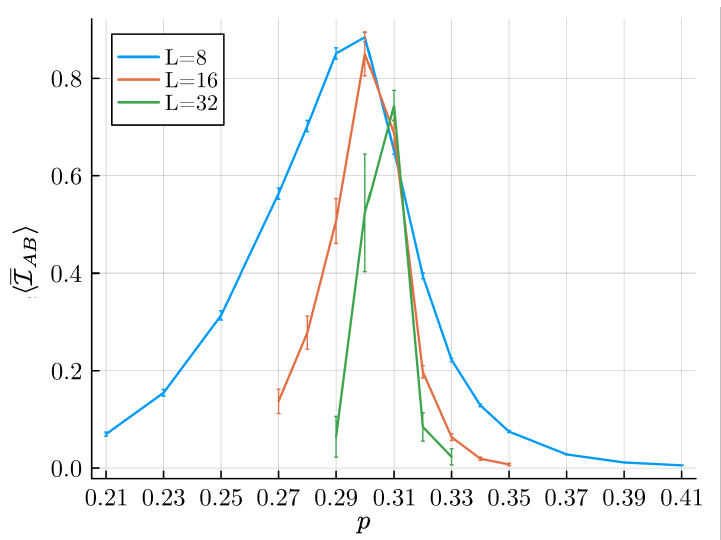}
    \caption{Late-time antipodal mutual information in the spacetime-random monitored Clifford circuit~\cite{lunt2021}. The peak remains below one for the accessible sizes $L=8$, $16$, and $32$ and shows no growth with system size.} 
    \label{fig:AMI_random}
\end{figure}

\subsection{Mean Krylov Dimension}
\label{app:krylov_mean}

In the main text, we introduced $\widetilde K=\langle K^2\rangle/\langle K\rangle$ to separate the characteristic IR scale from the overall IR weight $W$ and the finite UV contribution. This requires that $W$ does not decrease too rapidly with system size at the critical point. Fig.~\ref{fig:krylov_mean} shows $\log_2\langle K\rangle$ versus $\log_2 L$. Near the critical point, the slope remains clearly nonzero over the accessible system sizes, showing that the growing IR contribution remains significant. For each circuit realization, $K$ is first averaged uniformly over its SSLOs, and these sample means are then averaged uniformly over circuit realizations. The ratio $\widetilde K$ therefore provides a cleaner measure of the IR scale with reduced finite-size effects.

\begin{figure}[t]
    \centering
    \includegraphics[width=0.5\textwidth]{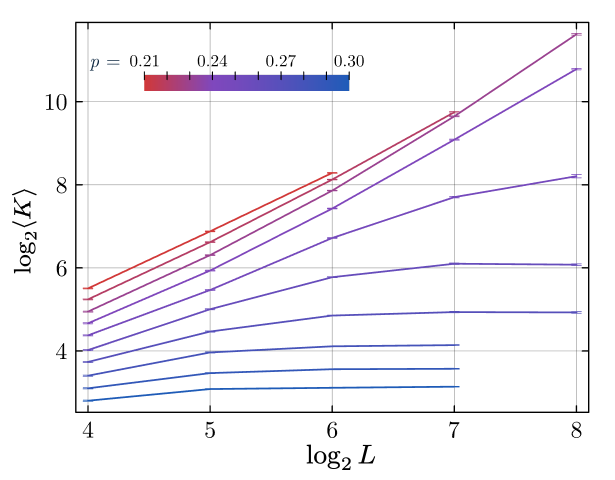}
    \caption{Mean Krylov dimension $\langle K\rangle$ as a function of system size for different measurement rates $p$, shown on a log-log scale. In the delocalized phase, the slope remains nonzero, with visible finite-size effects as it evolves with increasing system size. Near the transition, the slope remains nonzero over the accessible sizes, while in the localized phase it approaches zero.}
    \label{fig:krylov_mean}
\end{figure}

\FloatBarrier
\bibliography{main}

\end{document}